\documentclass[pra,twocolumn,amsmath,amssymb,superscriptaddress]{revtex4-2}
\usepackage[utf8]{inputenc}
\usepackage[dvipsnames]{xcolor}
\usepackage{amsfonts}
\usepackage{amsmath}
\usepackage{amssymb}
\usepackage{amsthm}
\usepackage{svg}
\usepackage{qcircuit}
\usepackage{xfrac} 
\usepackage{graphicx}% Include figure files
\usepackage{braket}
\usepackage{lipsum}
\usepackage{ORCIDinREVTeX}

\usepackage{bm}
\usepackage[makeroom]{cancel}

\definecolor{myred}{rgb}{0.8500,0.3250,0.0980}
\definecolor{mygreen}{rgb}{0.4660,0.6740,0.1880}
\usepackage{tikz}
\usetikzlibrary{positioning}

\usepackage{comment}

\newtheorem{theorem}{Theorem}
\newtheorem{definition}{Definition}

\newcommand{\hc}{\text{H.c.}}
\newcommand{\bblock}{$B$-block}
\newcommand{\pblock}{$P$-block}

\newcommand{\qblock}{$Q$-block}
\newcommand{\qdiam}{diamond}

\newcommand{\CTFIM}{\mathrm{CTFIM}}
\newcommand{\TFIM}{\mathrm{TFIM}}
\newcommand{\PTFIM}{\mathrm{PTFIM}}
\newcommand{\TFXY}{\mathrm{TFXY}}

\newcommand{\ham}{\mathcal{H}}
\newcommand{\fswap}{\mathcal{F}}

\newcommand{\bigO}{\mathcal{O}}

\newcommand{\redtext}[1]{\textcolor{black}{#1}}
\newcommand{\rredtext}[1]{\textcolor{black}{#1}}
\newcommand{\bluetext}[1]{\textcolor{blue}{#1}}

\newcommand{\myvec}{\Vec}
%\newcommand{\myvec}{\bm}
% --- Hyperlinks ------------------------------------------------------------- %
\usepackage[colorlinks,citecolor=blue,linkcolor=blue]{hyperref}
\usepackage[capitalise,nameinlink]{cleveref}
\crefname{section}{Sec.}{Secs.}
\crefname{table}{Tab.}{Tabs.}
\crefname{figure}{Fig.}{Figs.}
\crefname{definition}{Def.}{Defs.}
\crefname{lema}{Lem.}{Lems.}
\crefname{theorem}{Thm.}{Thms.}
\crefname{corollary}{Cor.}{Cors.}
% ---------------------------------------------------------------------------- %

\begin{document}

\title{Algebraic Compression of Free Fermionic Quantum Circuits:\texorpdfstring{\\}{}Particle Creation, Arbitrary Lattices and Controlled Evolution}

\author{Efekan K\"okc\"u}
\orcid{0000-0002-7323-7274}
\email{ekokcu@lbl.gov}
\affiliation{Applied Mathematics and Computational Research Division,
            Lawrence Berkeley National Laboratory,
            Berkeley, CA 94720, USA}
\affiliation{Department of Physics, North Carolina State University, Raleigh, North Carolina 27695, USA}

\author{Daan Camps}
\orcid{0000-0003-0236-4353}
\affiliation{National Energy Research Scientific Computing Center,
            Lawrence Berkeley National Laboratory,
            Berkeley, CA 94720, USA}
            
\author{Lindsay Bassman Oftelie}
\orcid{0000-0003-3542-1553}
\affiliation{Applied Mathematics and Computational Research Division,
            Lawrence Berkeley National Laboratory,
            Berkeley, CA 94720, USA}

\author{Wibe~A.~de~Jong}
\orcid{0000-0002-7114-8315}
\affiliation{Applied Mathematics and Computational Research Division,
            Lawrence Berkeley National Laboratory,
            Berkeley, CA 94720, USA}
            
\author{Roel~Van~Beeumen}
\orcid{0000-0003-2276-1153}
\affiliation{Applied Mathematics and Computational Research Division,
            Lawrence Berkeley National Laboratory,
            Berkeley, CA 94720, USA}

\author{A.~F.~Kemper}
\orcid{0000-0002-5426-5181}
\email{akemper@ncsu.edu}
\affiliation{Department of Physics, North Carolina State University, Raleigh, North Carolina 27695, USA}

\date{\today}

\begin{abstract}
Recently we developed a local and constructive algorithm based on Lie algebraic methods for compressing Trotterized
evolution under Hamiltonians that can be mapped to free fermions~\cite{kokcu2022algebraic,camps2022algebraic}. The compression algorithm yields a 
circuit which scales linearly in the number of qubits, has a depth independent of evolution time and
compresses time-dependent Hamiltonians. 
The algorithm is limited to simple nearest-neighbor spin interactions and fermionic hopping. In this work, we extend our methods
to compress 
evolution with long-range 
fermionic hopping, thereby enabling the embedding of arbitrary lattices onto a chain of qubits for fermion models. Moreover, we show that controlled time evolution, as well as fermion creation and annihilation operators can also be compressed. We demonstrate our results by adiabatically preparing the ground state for a half-filled fermionic chain, simulating
a $4 \times 4$ tight binding model on \emph{ibmq\_washington}, and calculating the topological Zak phase on a Quantinuum H1-1 trapped-ion quantum computer.
With these new developments, our results enable the simulation of a wider range of models of interest and the efficient compression of subcircuits.
\end{abstract}

\maketitle

% ==================== %
% === INTRODUCTION === %
% ==================== %
\section{Introduction}\label{sec:intro}%

Time evolution of a quantum state is one of the potential areas where quantum computers are expected to have an advantage over classical computers;
while state preparation lies in the QMA complexity class, time evolution is part of the simpler BQP class. 
Importantly for applications in physics and chemistry, time evolution is a building block for quantum simulation of physical and chemical systems.
It underpins state preparation via adiabatic time evolution, and is a critical step in determining dynamic response functions~\cite{chiesa2019quantum,Roggero19,francis2020quantum,Kosugi20,Kosugi20linear,endo2020calculation,libbi2022effective, kokcu2023linear, steckmann2023mapping, joven2024out}.
Furthermore, time evolution under a piecewise
constant Hamiltonian has also garnered some recent interest in the context of Floquet dynamics and phase transitions~\cite{rodriguez2022real,zhang2022digital,mi2022time}.

In all cases, a crucial challenge remains: How do we implement the time evolution operator $U(t)$ on a digital quantum computer? 
The Hamiltonian is typically a sum of Pauli strings $\sigma_k$,
\begin{align}\label{eq:intro_ham}
    \ham(t) = \sum_k \alpha_k(t) \sigma_k,
\end{align} 
where $\alpha_k(t)$ are real functions of time $t$, and $U(t)$ is related to the Hamiltonian $\ham(t)$ via the following equation
\begin{align}
    \partial_t U(t) = -i \ham(t) U(t),
\end{align}
with initial condition $U(t=0) = \mathcal{I}$.
In the simplest case where the Hamiltonian $\ham$ is time-independent, this expression simplifies to $U(t) = \exp(-i\ham t)$.
However, the Pauli strings $\sigma_k$ in the Hamiltonian \eqref{eq:intro_ham} usually do not commute with each other, and thus
$U(t)$ cannot be
directly decomposed into the 1- and 2-qubit gates that are available on a quantum computer.
Finding such a decomposition for $U(t)$ is entitled unitary synthesis and is a difficult problem that generally scales exponentially in the
number of qubits~\cite{Shende,khaneja2005constructive,drury2008quantum_shannon}.

A natural choice for decomposition, 
which covers the time-dependent case as well,
is by use of Trotterization, or expansion of the time evolution operator into small time steps $\delta t$,
\begin{align}
    U(t) \approx \underbrace{e^{-i \ham(t_1) \delta t} e^{-i \ham(t_2) \delta t} \cdots e^{-i \ham(t_N) \delta t}}_{N\, \mathrm{ time\:steps}} ,
\end{align}
where $t_n = n\delta t$ and $\delta t = t/N$.
This has the added advantage that the individual small time steps can be broken up to separate the Pauli strings while limiting the approximation error
\begin{align}
    e^{-i\ham(t_n) \delta t} \approx \prod_k e^{-i\alpha_k(t_n) \sigma_k \delta t} + \mathcal{O}(\delta t^2),
\end{align}
which leads to a natural circuit implementation.  The downside of this approach is that the circuit depth grows with simulation time $t$,
which is undesirable for current era noisy quantum computers which are limited in terms of circuit depths that can be run reliably.

There are a variety of methods proposed to  overcome or bypass the problems that arise from the deep circuits from a na\"ive Trotter implementation.
These include variational approaches to learn shorter circuits~\cite{commeau2020variational,bassman2022constant,zhang2024scalable}, and decompositions based on algebraic methods
that do not grow as a function of simulation time~\cite{khaneja2005constructive,earp2005constructive,drury2008quantum_shannon,kokcu2022fixed,kokcu2022algebraic,camps2022algebraic,wierichs2025recursive}.

In general, due to no-go theorems on fast-forwarding quantum evolution, these circuits may be exponentially deep~\cite{berry2007efficient,atia2017fast,gu2021fast}.
However, for Hamiltonians that can be mapped onto free fermionic ones, shallow fixed depth circuits, that are independent of simulation time, exist~\cite{kokcu2022algebraic,camps2022algebraic,babush1,babush2,babush3,peng2022quantum,pnnl_compression}. 
Previous work by the authors and others have demonstrated that the circuit elements that arise from evolution under a \emph{time dependent} 1D nearest-neighbor free fermionic Hamiltonian obey
three algebraic properties --- fusion, commutation, and turnover --- which can be used for an efficient compression algorithm that results in a fixed-depth circuit.
While the class of free fermionic Hamiltonians is somewhat restrictive, since it includes some of the more common models studied in quantum computing --- such as the nearest-neighbor transverse field Ising model (TFIM) and nearest-neighbor transverse field XY model (TFXY) --- the compression algorithm has found applications~\cite{oftelie2022simulating,kokcu2023linear},
or has otherwise set a benchmark for what comprises a minimal
circuit~\cite{dupont2022entanglement,sopena2022algebraic,jian2023towards,low2022on,kaldenbach2022digital,wierichs2025recursive}.
Since the compression algorithm starts from a Trotter decomposition, it readily handles evolution under a time-dependent Hamiltonian as well.
Moreover, the circuit compression is not limited to time evolution, but is applicable to any circuit elements that obey a few key relationships.
Thus, the compression algorithm can also be used for subcircuits built of free fermionic operators.
This situation arises regularly because part of interacting fermionic evolution is free fermion hopping.

%%% Figure: Overview of existing results
\begin{figure*}[htpb]
    \centering
    \includegraphics[width = 1.65 \columnwidth]{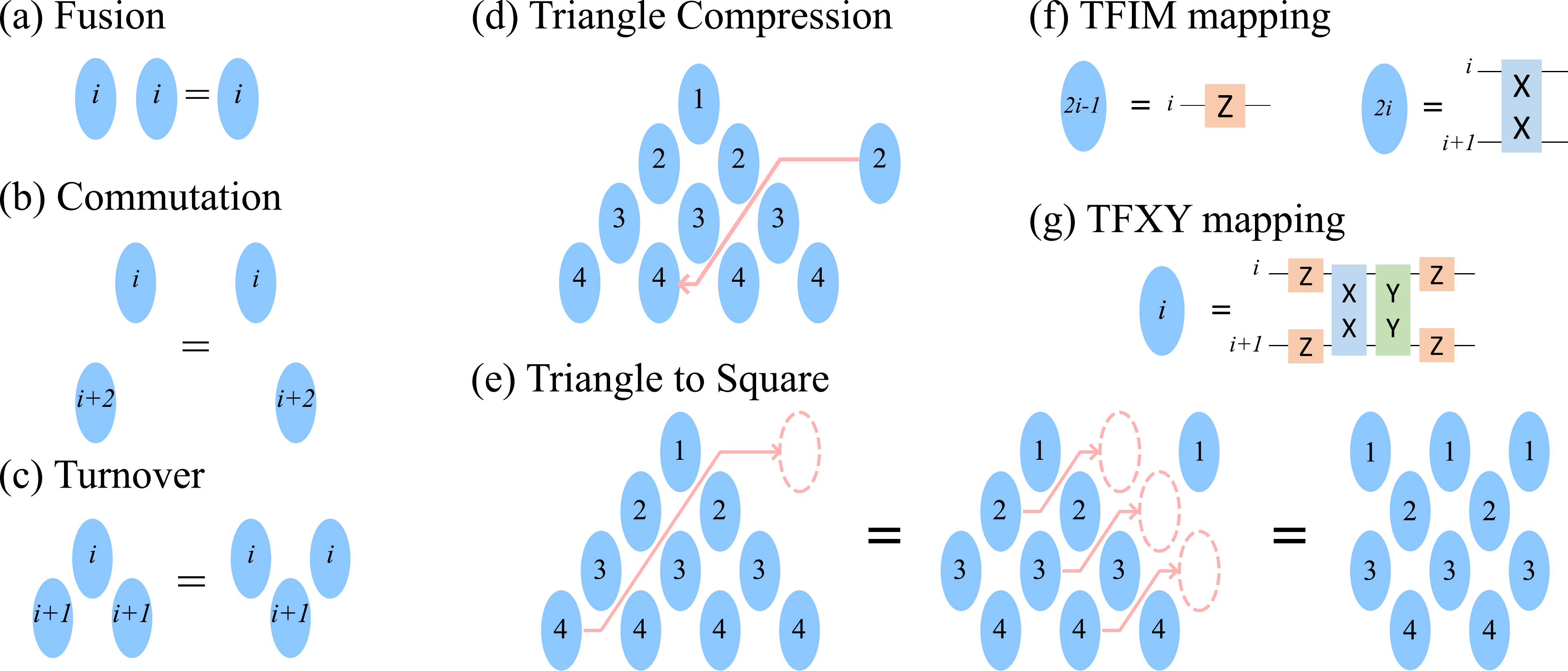}
    \caption{Summary of the results from Refs.~\cite{kokcu2022algebraic,camps2022algebraic} that we will use in this work. Panels (a-c) illustrate the block properties listed in \cref{def:block}. Panel (d) illustrates the triangle structure and how it can absorb a \bblock\ via the \bblock\ properties. Panel (e) is an illustration of the usage of \bblock\ properties to transform a triangle into a shallower square structure. Panels (f) and (g) illustrate two different realizations or representations of the abstract \bblock\ objects as circuit elements, i.e. \bblock mappings, for transverse field Ising model (TFIM) and transverse field XY (TFXY) model. \rredtext{Note that \bblock mappings are not limited to TFIM and TFXY, and more examples can be found in Refs.~\cite{kokcu2022algebraic,camps2022algebraic}.}
    }
    \label{fig:all_block_things}
\end{figure*}

In this paper, we extend our previous compression work in four ways. 
We show that the evolution under the following can be compressed: 
\begin{enumerate}
    \item long-range fermionic hopping and fermionic swap (FSWAP) operators (\cref{sec:swap_and_longrange}),
    \item Hermitian linear combinations of particle creation/annihilation operators (\cref{sec:creation_annihilation}), 
    \item free fermionic evolution controlled by a single ancilla qubit (\cref{subsec:controlled_free_fermions}),
    \item 1D TFIM, TFXY, and Kitaev models with periodic boundary conditions (\cref{asec:pbc}).
\end{enumerate}
These developments, when combined, lead to a number of significant improvements and enhanced capabilities of our approach.
First, long-range hopping and FSWAP compression enables the compression of free fermionic evolution on any lattice, not just linear chains.
Moreover, the FSWAP operator is key in any situation where fermionic statistics matter.
Second, compressing single-particle creation operators combines the preparation of a free fermionic or Hartree-Fock state into %the
a subsequent
time evolution.
And finally, controlled evolution is an important yet difficult problem, and we show that it is possible to compress 
%as long as there is 
this type of circuits which have
a single control ancilla.
The most notable application is the compression of Hadamard-test style circuits, where wave function overlaps need to be measured. 
%In case
If a
controlled evolution with multiple ancilla is required, for example in the construction of a SELECT oracle, our single control compressed evolution circuit can be easily extended to admit multi-controlled evolution by introducing additional ancilla and Toffoli gates~\cite{Barenco}.

The paper is structured as follows. In \cref{sec:notation_and_recap}, we introduce the notations and the gates we will use throughout the paper, as well as the notions of blocks, block rules and compression theorems from Refs.~\cite{kokcu2022algebraic, camps2022algebraic}. We then introduce the block mappings from the same papers for TFIM and TFXY models, and show their connections to free fermions. 
In \cref{sec:swap_and_longrange}, we show that these TFIM and TFXY mappings can be used to compress free fermions on any lattice. We demonstrate this by simulating a $4 \times 4$ tight binding model on \textit{ibmq\_washington}. 
In \cref{sec:creation_annihilation}, we introduce a new block mapping which allows us to compress the creation-/annihilation of single fermions as well as fermion pairs. We demonstrate this by preparing the ground state of a TFXY model Hamiltonian by using adiabatic state preparation. In \cref{sec:qblock-qcompression}, we introduce a new mathematical structure ``\qblock", \qblock\ rules, and a $Q$-compression which allows us to compress both \qblock s and blocks together. 
In \cref{subsec:controlled_free_fermions} and \cref{subsec:controlled_free_fermions_creation}, we provide \qblock\ mappings that allow us to compress controlled evolutions of both free fermions, and free fermions with single fermion creation/annihilation, on any given lattice. We use the \qblock\ mapping for the free fermions to demonstrate the topological phase transition of the free Creutz-Hubbard model, by calculating ground state overlaps over different parameters in the phase space using Quantinuum's H1-1E emulator and H1-1 QPU hardware.

The compression software is available as part of the fast free fermion compiler (\texttt{F3C}) \cite{f3c, f3cpp} at \url{https://github.com/QuantumComputingLab}.
\texttt{F3C} is based on the \texttt{QCLAB} toolbox \cite{qclab,qclabpp} for creating and
representing quantum circuits.

\section{Notation and recap}\label{sec:notation_and_recap}

In this section, we will define short-hand notations for gates such as ``Pauli string rotations" and ``free fermionic gates", and re-state certain theorems and results from our previous works \cite{kokcu2022algebraic, camps2022algebraic} for completeness.

\subsection{Notation for circuit elements}

Throughout this paper, we will illustrate the following 1- and 2-qubit operators as below: 
\begin{align}
    &\includegraphics[height = 1.3in]{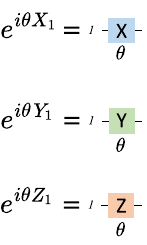}
    &\includegraphics[height = 1.3in]{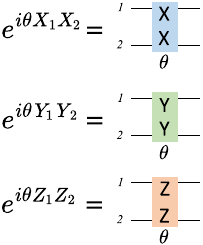}
\end{align}
In some figures, such as \cref{fig:all_block_things}(f) and (g), angles are omitted for convenience. Unless otherwise specified, it is assumed that the angle for each gate is a free parameter 
in these cases.
To extend the notation to generic Paui strings, we will frequently use gate representations of the forms $e^{i \theta X_1 Z_2 Z_3 \cdots Z_{n-1} X_n}$ and $e^{i \theta Y_1 Z_2 Z_3 \cdots Z_{n-1} Y_n}$. We illustrate these operators here for $n = 4$:
\begin{align*}
    \centering
    \includegraphics[width =.95\columnwidth]{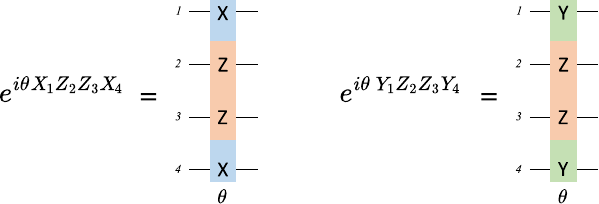}
\end{align*}
Note that these gates correspond to exponentials of single Pauli strings, which can be implemented via 1-qubit gates and CNOT gates \cite{nielsen2010quantum,kokcu2022fixed,magoulas2023cnot}.

\subsection{Block Definition}

In Refs.~\cite{kokcu2022algebraic,camps2022algebraic}, we introduced a novel mathematical operator referred to as a \emph{block} ($B$) and characterized by three specific properties. These defining features are succinctly outlined in Definition~\ref{def:block} below. A block is denoted by index $i$ and one or more parameters $\myvec{\theta}$, typically expressed as $B_i(\myvec{\theta})$. It can be associated with specific parameterized quantum gate operations. Because in this paper another type of block is defined, we refer to the block structure from~\cite{kokcu2022algebraic,camps2022algebraic} as \emph{\bblock}.

\begin{definition}[\bblock~\cite{kokcu2022algebraic,camps2022algebraic}]
\label{def:block}
Define a ``B-block'' $B_i(\myvec{\theta})$ as a structure that satisfies the following three properties:
\begin{enumerate}
  \item \textbf{Fusion:} for any set of parameters $\myvec{\alpha}$ and $\myvec{\beta}$, there exist $\myvec{a}$ such that
        \begin{equation}
          B_i(\myvec{\alpha}) \, B_i(\myvec{\beta}) = B_i(\myvec{a}),
        \end{equation}
  \item \textbf{Commutation:} for any set of parameters $\myvec{\alpha}$ and $\myvec{\beta}$ such that
        \begin{equation}
          B_i(\myvec{\alpha}) \, B_j(\myvec{\beta}) = B_j(\myvec{\beta}) \, B_i(\myvec{\alpha}), \qquad |i-j|>1,
        \end{equation}
  \item \textbf{Turnover:} for any set of parameters $\myvec{\alpha}$, $\myvec{\beta}$ and $\myvec{\gamma}$ there exist $\myvec{a}$, $\myvec{b}$ and $\myvec{c}$ such that 
        \begin{equation}
          B_i(\myvec{\alpha}) \, B_{i+1}(\myvec{\beta}) \, B_i(\myvec{\gamma}) = B_{i+1}(\myvec{a}) \, B_i(\myvec{b}) \, B_{i+1}(\myvec{c}).
        \end{equation}
\end{enumerate}
\end{definition}

These three properties are summarized in \cref{fig:all_block_things} panels (a-c).
Based on these properties, we have shown in~\cite{kokcu2022algebraic, camps2022algebraic} that any quantum circuit that only consist of \bblock s can be compressed to a triangle, also indicated in \cref{fig:all_block_things} panel (d),
where the block on the right follows the red line down, using repeated turnover operations, and fuses with the block at the end of the line.
Additionally, as shown in panel (e), a triangle can be transformed into a shallower structure, known as a square, using only B-block properties. 

\subsection{TFIM and TFXY Block Mappings}

\rredtext{In Refs.~\cite{kokcu2022algebraic,camps2022algebraic}, we introduced several $B$-block mappings which map a certain set of gates to a list of $B$-blocks, including mappings for the 1-D TFIM, TFXY, Kitaev and XY model.}
\rredtext{In this paper, we will 
only
be focusing on the TFIM and TFXY block mappings, 
%only
which are}
shown in panels (f--g) of \cref{fig:all_block_things}. Mathematically, the TFIM block mapping is given by
\begin{align}\label{eq:tfim_block_mapping}
\begin{split}
    B^\TFIM_{2i-1}(\theta) &= e^{i \theta Z_{i}},\\
    B^\TFIM_{2i}(\theta) &= e^{i \theta X_{i} X_{i+1}},
\end{split}
\end{align}
where $i > 0$, and the TFXY block mapping is given by
\begin{align}\label{eq:tfxy_block_mapping}
\begin{split}
    B^\TFXY_{i}(\myvec{\theta}) = \ & e^{i \theta_1\: Z_i}e^{i \theta_2\: Z_{i+1}}e^{i \theta_3\: X_i X_{i+1}}\\
    &e^{i \theta_4\: Y_i Y_{i+1}}\:e^{i \theta_5\: Z_i}e^{i \theta_6\: Z_{i+1}}.  
\end{split}
\end{align}
\rredtext{The names TFIM and TFXY mapping come from the gate sets: the TFIM mapping includes only $XX$ and $Z$ gates, and naturally describes a 1-D TFIM Hamiltonian, and the TFXY mapping includes all $XX$, $YY$ and $Z$ gates which allows a 1-D TFXY model Hamiltonian to be naturally represented by the TFXY mapping. 
}
\redtext{Due to the frequent use of the TFXY mapping, we introduce the concept of \textit{free fermionic matchgates} to compactly represent these gates, defined as follows:}
\begin{align}\label{eq:matchgate_definition}
\vcenter{\hbox{\includegraphics[width = 0.7\columnwidth]{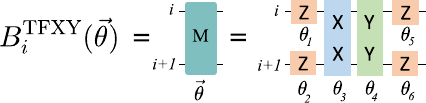}~}}.
\end{align}
\redtext{We label these gates with letter ``M" in the illustrations, as they are a specific instance of what is known in the literature as ``matchgates" \cite{bassman2022constant, projansky2024extending}. However, matchgates are defined differently in other works \cite{jozsa2008matchgates}, which are generalized versions of the gate above. 
We will refer to the gate defined in \cref{eq:matchgate_definition} as either a ``free fermionic matchgate" or a ``free fermionic gate", interchangeably, since they are equivalent to a combination of nearest-neighbor fermion hopping, pair creation-annihilation, and on-site chemical potentials under a Jordan-Wigner transformation.
}

It was shown in Ref.~\cite{kokcu2022algebraic} that the mappings~\eqref{eq:tfim_block_mapping} and \eqref{eq:tfxy_block_mapping} obey the \bblock\ properties. 
These two block mappings are equivalent to each other and can be used interchangeably: TFXY blocks can be transformed to TFIM blocks and vice versa.
For example, a conversion from TFIM to TFXY can be achieved as follows. Starting from a TFIM triangle on 2 qubits, we can regroup them into TFXY blocks with certain angles set to zero:
\begin{align}\label{eq:tfim2tfxy}
    \vcenter{\hbox{\includegraphics[width = 0.85\columnwidth]{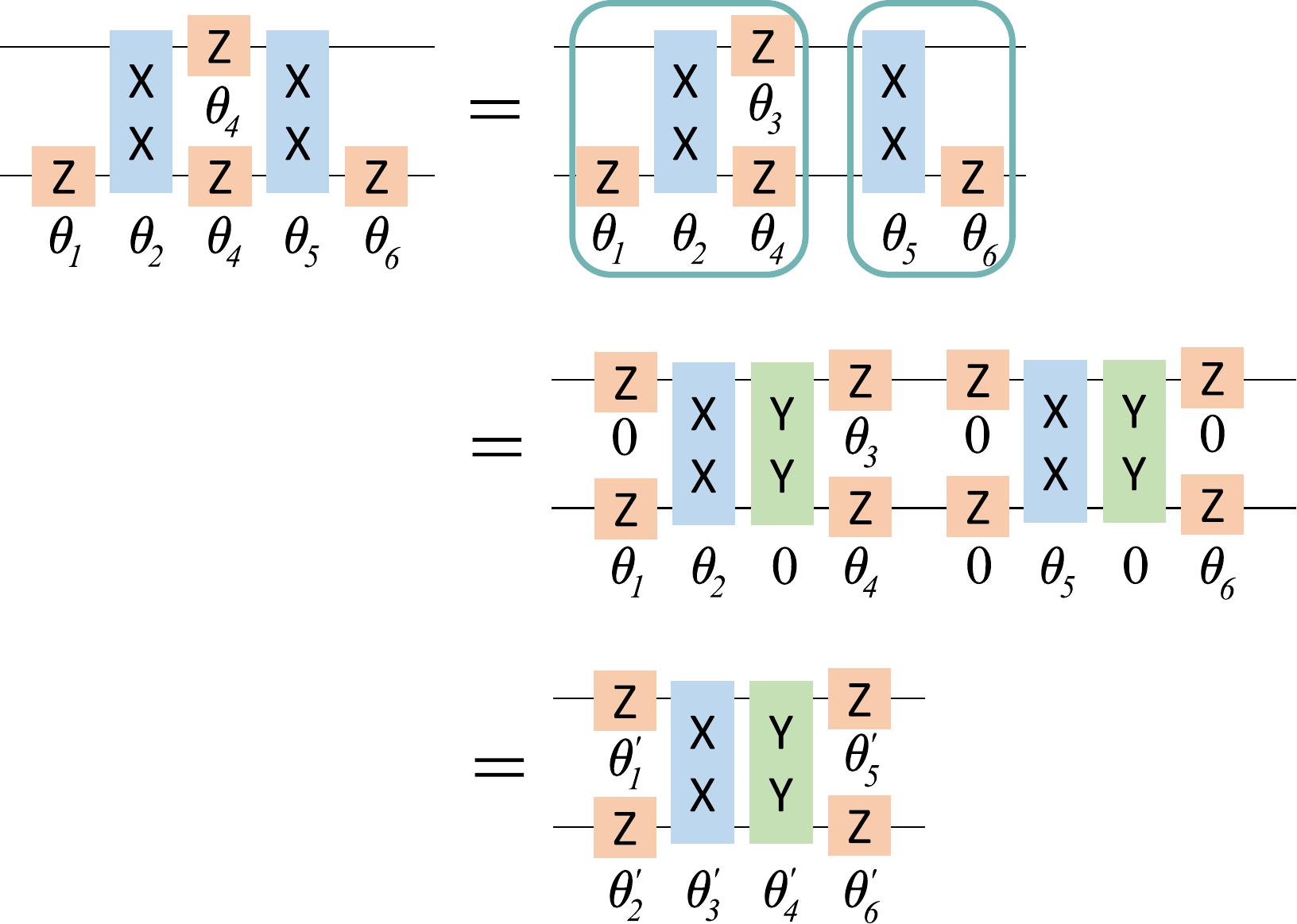}~}}
\end{align}
In the final step, we used the TFXY fusion operation given in \cite[Sec. III.D]{kokcu2022algebraic}. This TFIM $\rightarrow$ TFXY transformation will be used frequently in the following sections.
The reverse conversion can be realized in a similar way: a $YY$-rotation is just an $XX$-rotation with additional $Z$-rotations on both qubits, thus,
\begin{align}\label{eq:tfxy2tfim}
    \vcenter{\hbox{\includegraphics[width = 0.75\columnwidth]{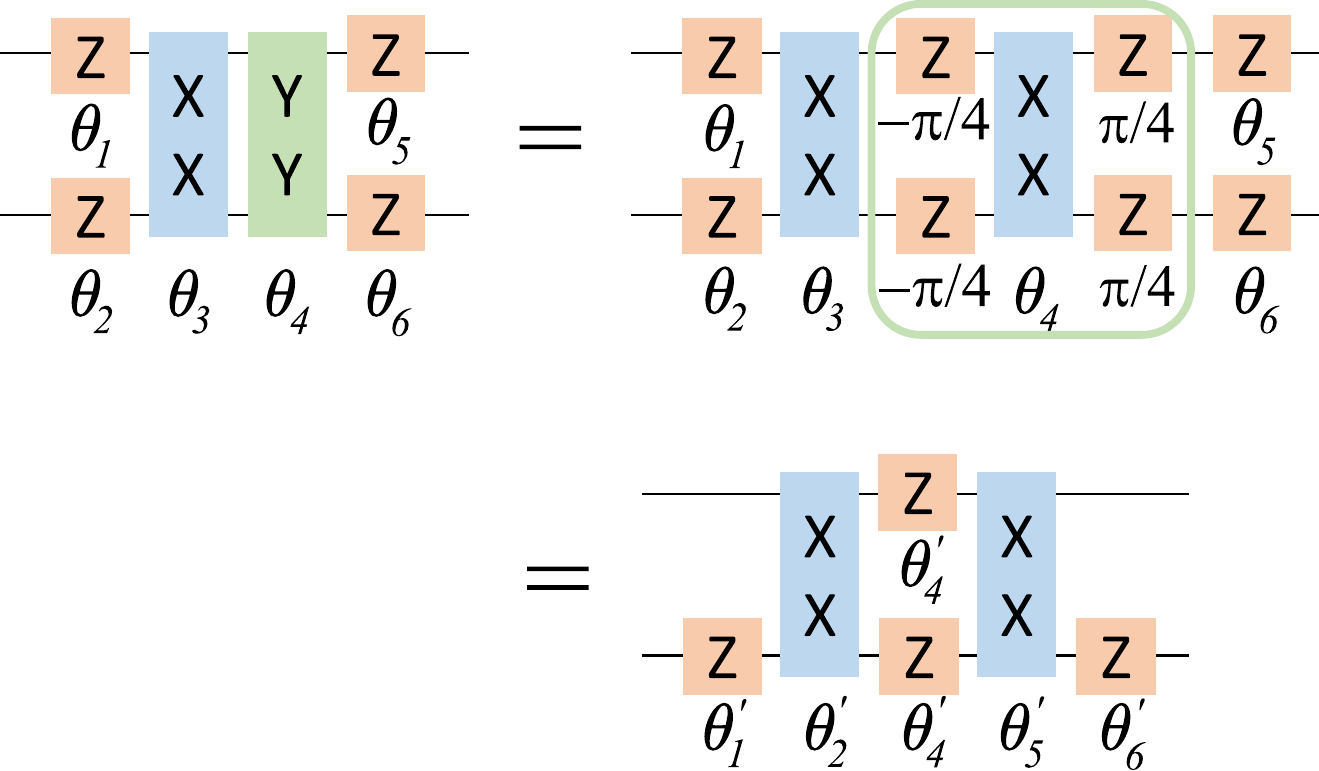}~}}
\end{align}
\rredtext{where in the last step we have used algebraic compression of TFIM blocks given in \cite[Sec. III.C]{kokcu2022algebraic} to rewrite the series of TFIM blocks as a TFIM triangle.}

The two mappings each have their own advantages. The TFXY mapping only requires 2 CNOTs per block~\cite{vidal2004universal} using the following circuit decomposition
\begin{align}\label{eq:matchgate}
    \vcenter{\hbox{\includegraphics[width = 0.75\columnwidth]{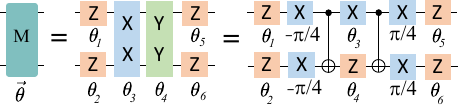}~}},
\end{align}
while the same structure in a TFIM mapping requires 4 CNOTs in general (see Eq.~\ref{eq:tfxy2tfim}), since each $XX$ rotation requires 2 CNOTs. The TFXY mapping thus only requires half as many CNOTs. On the other hand, compressing a TFIM mapping is approximately 20 times faster on a classical computer and more straightforward~\cite{camps2022algebraic}.
Throughout the paper, \rredtext{we will be using both of these mappings extensively, and benefit from the simplicity of TFIM in proofs, and CNOT efficiency of TFXY in the circuit constructions and hardware demonstrations.}

As shown in~\cite{kokcu2022algebraic}, the TFIM and TFXY models are connected to free fermions via the Jordan--Wigner transformation,
\begin{align}\label{eq:JW}
\begin{split}
    c_n &= \frac{1}{2} Z_1 Z_2 \cdots Z_{n-1} (X_n - i Y_n), \\
    c_n^\dagger &= \frac{1}{2} Z_1 Z_2 \cdots Z_{n-1} (X_n + i Y_n),
\end{split}
\end{align}
where $c_n$ (resp. $c_n^{\dagger}$) are the usual fermionic annihilation (resp. creation) operators.
Then, for example, we have
\begin{align}
\begin{split}
    c^\dagger_{n} c_{n+1} + \hc =& -\frac{1}{2}\left( X_n X_{n+1} + Y_n Y_{n+1}  \right), \\
    c^\dagger_{n} c_{n} =& -\frac{1}{2}\left( I - Z_n \right),
\end{split}
\end{align}
which can directly be represented by TFIM and TFXY block mappings.
In a similar way, $X_n X_{n+1}$, $Y_n X_{n+1}$, $X_n Y_{n+1}$, and $Y_n Y_{n+1}$ can be written as a linear combination of $c^{(\dagger)}_{n} c^{(\dagger)}_{n+1}$ terms.
With this, we have shown that the TFIM and TFXY block mappings can be used to compress time evolution circuit for any time dependent, free fermionic Hamiltonian on a 1D open chain with nearest-neighbor hopping $c_{i+1}^\dagger c_i$, nearest-neighbor pair creation (annihilation) $c_{i+1} c_i$ ($c_{i+1}^\dagger c_i^\dagger$), and onsite potential $c_{i}^\dagger c_i$ terms:
\begin{align}\label{eq:hamfree_only}
\begin{split}
    \ham(t) = \sum_{i = 1}^{n-1} \big( h_{i}(t) \, c_i^\dagger c_{i+1} + p_{i}(t) \, c_i c_{i+1} + u_i(t) c_i^\dagger c_i &\big) \\
    + \hc&
\end{split}
\end{align}

\section{New Block Mapping Results}
\label{sec:new_block_mappings}

In this section, we will show new results that we obtained with the already existing \bblock\ mappings. First, we will show that the TFXY and TFIM mappings are capable of representing long range free fermion hoppings as well, and apply this to a 2-D $4 \times 4$ tight binding model. Secondly, we show that with the addition of one more element, the TFIM mapping is capable of implementing fermion creation-annihilation as well. With the results of this section, we are now capable of compressing Hamiltonians that have quadratic and linear terms of $c_i$ and $c_i^\dagger$, which are of the following form:
\begin{align}\label{eq:lrfermions_c1}
   \ham(t) = \sum_{i,\mathbf{j}} \big( h_{i\mathbf{j}}(t) \, c_i^\dagger c_{\mathbf{j}} + p_{i\mathbf{j}}(t) \, c_i c_{\mathbf{j}} +
   \mathbf{q_i(t) \, c_i} \big) + \hc,
\end{align}
where long range and creation terms highlighted in bold are new.

\subsection{Fermionic Swap Gate and Long-Range Hoppings}
\label{sec:swap_and_longrange} 

First, we show that the TFXY and TFIM block mappings can compress any free fermionic Hamiltonian with long-range hoppings,
\begin{align}\label{eq:free_fermion_ham}
    \ham(t) = \sum_{i,\mathbf{j}} \big( h_{i\mathbf{j}}(t) \, c_i^\dagger c_\mathbf{j} + p_{i\mathbf{j}}(t) \, c_i c_\mathbf{j} \big) + \hc \, ,
\end{align}
where the novelty is the arbitrary lattice connectivity between any $i,j$ pair as highlighted by the boldface $\mathbf{j}$.
\redtext{
To illustrate these long range terms, we define the following generalization of the free fermionic matchgates given in \cref{eq:matchgate_definition}:
}
\begin{align}\label{eq:generalized_matchgate}
    \centering
    \includegraphics[width = 0.55\columnwidth]{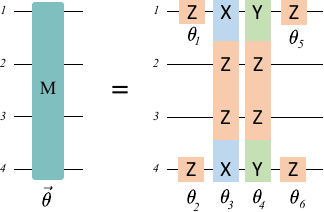}.
\end{align}
\redtext{ 
The above is an illustration of the gate between qubits 1 and 4, however it can be generalized to any pair of qubits.
These gates are equivalent to exponentials of linear combinations of Jordan-Wigner transformed versions of long range fermion hopping $(c_1^\dagger c_4, c_4^\dagger c_1)$, pair creation-annihilations $(c_1^\dagger c_4^\dagger, c_4 c_1)$, and on site chemical potentials $(c_1^\dagger c_1, c_4^\dagger c_4)$, i.e.,
\begin{equation}
    e^{i (z c_1^\dagger c_4 + z^* c_4^\dagger c_1 +w c_1^\dagger c_4^\dagger +  w^* c_4 c_1 + \alpha c_1^\dagger c_1 + \beta c^\dagger_4 c_4)},
\end{equation}
where $z,w \in \mathbb{C}$ complex numbers, and $\alpha, \beta \in \mathbb{R}$ real numbers. The $Z$ chain between the end qubits 1 and 4 results from the $Z$ chain given in the Jordan-Wigner transformation given in \cref{eq:JW}.
}

\begin{figure*}[htpb]
    \centering
    \begin{tikzpicture}
    \node[anchor = north, inner sep = 0](image) at (0,0){
    \includegraphics[width=1.8\columnwidth]{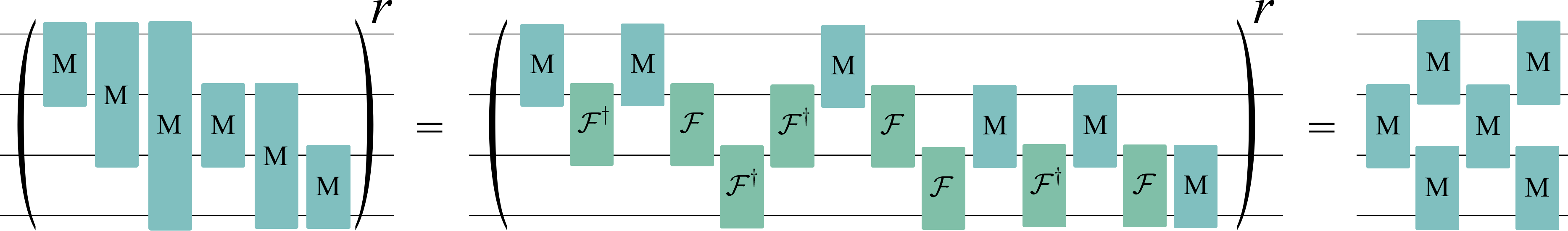}};
    \node[anchor = north, inner sep = 0](image) at (0,-2.5){
    \includegraphics[width = 2\columnwidth]{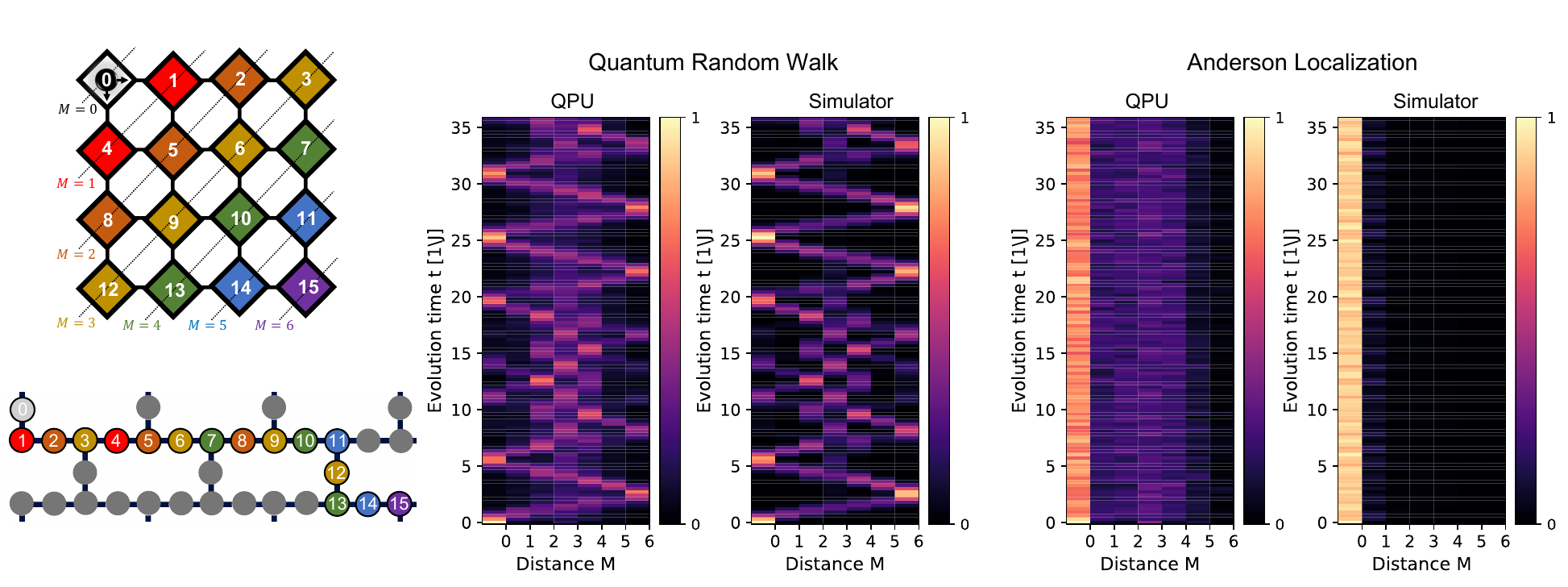}};
    \node[black, fill = white] at (-8.5, 0){(a)};
    \node[black, fill = white] at (-8.5, -2.9){(b)};
    \node[black, fill = white] at (-8.5, -6.4){(c)};
    \node[black, fill = white] at (-3.5, -2.9){(d)};
    \node[black, fill = white] at (2.9, -2.9){(e)};
    \end{tikzpicture}
    \caption{\rredtext{(a) An illustration of compression of long range hoppings. Any non-local fermion hopping can be written in terms of nearest neighbor hoppings as shown in \cref{eq:long_range_hopping}, which then can be compressed into a TFXY square. }
    The other panels illustrate the simulation results from \emph{ibm\_washington} of a free-fermion on a two-dimensional (2D) lattice. Specifically:
    (b) Schematic of the 16-site, 2D lattice, color-coded by the distance $M$ each lattice site is from the reference lattice site, labeled `0'.  The fermion is initialized at reference lattice site `0', and is allowed to evolve freely in time. (c) The topology of the qubits in the \emph{ibmq\_washington} quantum processing unit (QPU) is shown, along with how the lattice sites from the schematic in panel (b) are mapped to the particular 16 qubits used in these simulations. (d) The occupation number at each distance $M$ versus time with no disorder in the lattice.  This leads to ballistic transport of the fermion.  The left-hand plot shows results from the QPU, while the right-hand plot shows results from a noise-free quantum simulator. (e) The occupation number at each distance $M$ versus time with random disorder in the lattice.  This leads to Anderson localization of the fermion.  The left-hand plot shows results from the QPU, while the right-hand plot shows results from a noise-free quantum simulator.
    }
    \label{fig:2DFF_simulation}
\end{figure*}

\redtext{In order to show that time evolution under the long range free fermionic Hamiltonian given in \cref{eq:long_range}}, we will use the fermionic swap (FSWAP) gate, which is defined as follows
\begin{align}\label{eq:fswap_matrix}
    \fswap_{j,j+1} =  i \begin{bmatrix} 
                        1 & 0 & 0 & 0\\
                        0 & 0 & 1 & 0\\
                        0 & 1 & 0 & 0\\
                        0 & 0 & 0 & -1
                    \end{bmatrix},
\end{align}
for nearest neighbor qubits $j$ and $j+1$.
Up to a global phase $i$, this operation is equivalent to a swap operation that keeps track of the sign generated by fermion exchange, which leads to the following relations
\begin{align}\label{eq:fermion_shift}
\begin{split}
    \fswap_{i,i+1} \ c_i^{(\dagger)} \ \fswap_{i,i+1}^\dagger &= c_{i+1}^{(\dagger)}, \\ \fswap_{i,i+1} \ c_{i+1}^{(\dagger)} \ \fswap_{i,i+1}^\dagger &= c_i^{(\dagger)}.
\end{split}
\end{align}
With the global phase $i$, $\fswap_{j,j+1}$ can be mapped to a TFXY block~\eqref{eq:tfxy_block_mapping} with index $j$, and parameters $\theta_1 = \theta_2=0$ and $\theta_3=\theta_4=\theta_5=\theta_6=\pi/4$. 
It follows that the $\fswap_{j,j+1}$ operation can thus be compressed within the existing TFXY \bblock\ compression algorithm.

From Eq. \eqref{eq:fermion_shift}, we can use FSWAP gates to generate long-range quadratic terms from nearest-neighbor terms,
\begin{align}\label{eq:long_range}
    c_1^{(\dagger)} c_4^{(\dagger)} = \fswap_{3,4}\fswap_{2,3}  \ c_1^{(\dagger)} c_2^{(\dagger)} \ \fswap_{2,3}^\dagger\fswap_{3,4}^\dagger.
\end{align}
\redtext{This expression is an exponent of a long range free fermionic gate, thus we can illustrate \cref{eq:long_range} as follows}
\begin{align}\label{eq:long_range_hopping}
    \centering
    \vcenter{\hbox{\includegraphics[width = .75 \columnwidth]{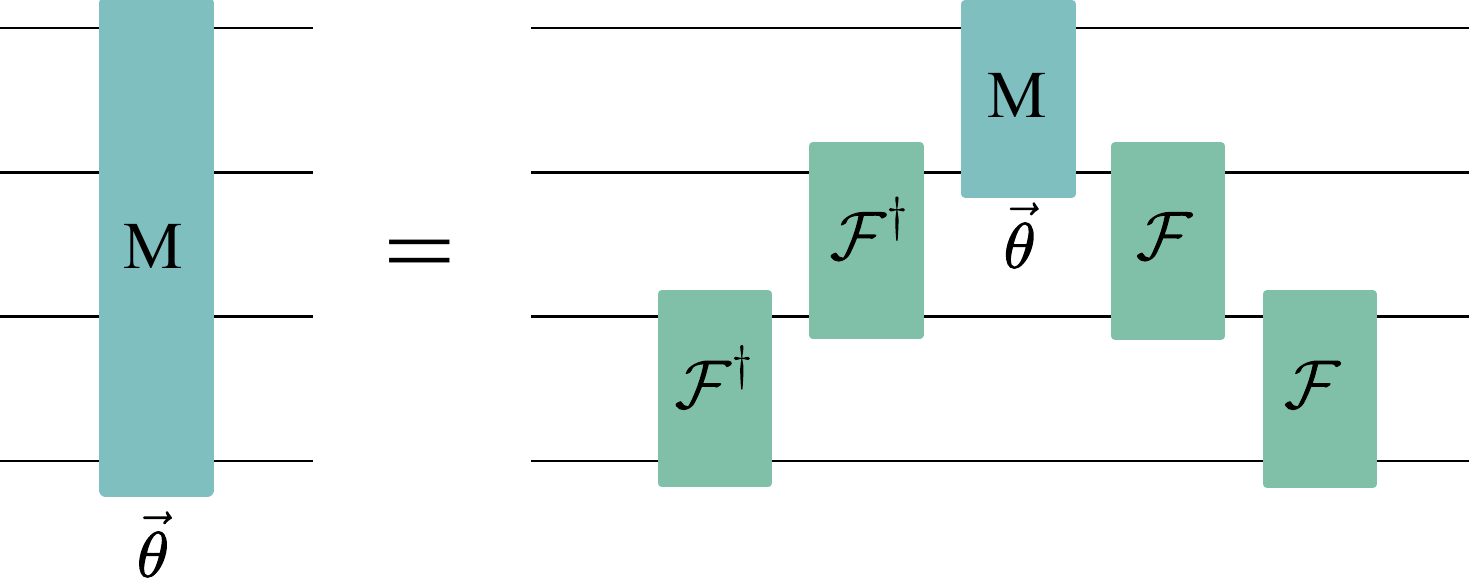}}}.
\end{align}
\rredtext{Here we illustrated fermionic swap (dagger) gates as free fermionic gates with label $\mathcal{F}^{(\dagger)}$, 
\redtext{and with a different color to specify them.}
This way we can decompose a long range hopping into 
\redtext{a series of nearest neighbor free fermionic matchgates, i.e.}
TFXY \bblock s.}
This means that the TFIM and TFXY block mappings given in \cref{eq:tfim_block_mapping,eq:tfxy_block_mapping} can generate long-range quadratic terms, and thus can be used to compress quadratic fermionic Hamiltonians on any lattice. 
%of form \eqref{eq:free_fermion_ham}. 
An illustration of this approach is shown in \cref{fig:2DFF_simulation}(a),
where we schematically decompose and compress 
$r$ long-range fermionic hoppings.

\rredtext{We utilize our long range free fermion compression to}
simulate transport of free-fermions 
%with long-range hopping
\rredtext{on a 2D lattice}
on \textit{ibmq\_washington}
\cite{karamlou2022quantum}.  In particular, we consider a fermion initialized on a reference lattice site, and observe how it evolves.
%freely through a two-dimensional (2D) lattice.  
The lattice, pictured in \cref{fig:2DFF_simulation}(b), comprises 16 sites with 
%closed
open
boundary conditions. The lattice site labeled `0' is considered our reference site, where the fermion is initialized.  The lattice sites are color-coded by their taxicab or Manhattan distance $M$ to the reference site.  Each lattice site is mapped to a qubit on the \emph{ibmq\_washington} quantum processing unit (QPU), which is %one of the
an IBM Quantum
Eagle r1 processor.
Note that the 2D lattice of sites is mapped to a one-dimensional chain of qubits on the QPU, and thus requires long-range hopping operations as in~\cref{eq:long_range_hopping}.  For example, in the 2D lattice, the fermion can hop between nearest-neighbor sites 11 and 15 (see \cref{fig:2DFF_simulation}(b)), but when the lattice is mapped to the qubits in the QPU, sites 11 and 15 are no longer neighbors (see \cref{fig:2DFF_simulation}(c)), and thus a long-range interaction between these qubits must be implemented.  

We examine transport of a fermion on this 16-site lattice both with and without disorder.  To do this, we track the occupation number at varying distances $M$ from the reference site as the fermion evolves freely through time.  When there is no disorder in the system, we expect the fermion to behave ballistically and oscillate back and forth within the lattice. \cref{fig:2DFF_simulation}(d) shows results from simulating a free fermion on a 2D lattice with no disorder on the \emph{ibmq\_washington} QPU as well as on a noise-free quantum simulator.  The colors in the plots correspond to the occupation number at each distance $M$ versus evolution time.  When there is large random disorder in the system, we expect the fermion to exhibit Anderson localization \cite{andersonloc1, andersonloc2}. \cref{fig:2DFF_simulation}(e) shows results from simulating a free fermion on a 2D lattice with large random disorder on the \emph{ibmq\_washington} QPU as well as on a noise-free quantum simulator.  Simulations on the QPU were performed with 50,000 shots and any shot that did not conserve particle number was discarded.  Two straightforward error mitigation techniques were also used to reduce noise in the results from the QPU.  The first was a scalable readout error mitigation method implemented with the $mthree$ package \cite{nation2021scalable}, which reduces errors in quantum measurement via calibration.  The second was dynamical decoupling \cite{viola1998dynamical, zanardi1999symmetrizing, vitali1999using, duan1999suppressing}, a method that can suppress qubit decoherence via the application of a set of pulses (which together amount to application of the identity operator) to idling qubits which cancels the system-environment interaction \cite{pokharel2018demonstration}.  

While the results from the QPU still exhibit some level of noise, there is good qualitative agreement with the results from the exact simulator, and a difference in transport can clearly be identified between lattices with and without disorder.  We attribute the impressive results of these 16-qubit dynamic simulations on a real QPU to our compression algorithm, which produces short-depth circuits that do not grow with increasing evolution time.

% ============================ %
% == ANNIHILIATION/CREATION == %
% ============================ %
\subsection{Fermion Annihilation and Creation}
\label{sec:creation_annihilation}

\begin{figure*}[htpb]
    \centering
    \includegraphics[width = 1.5 \columnwidth]{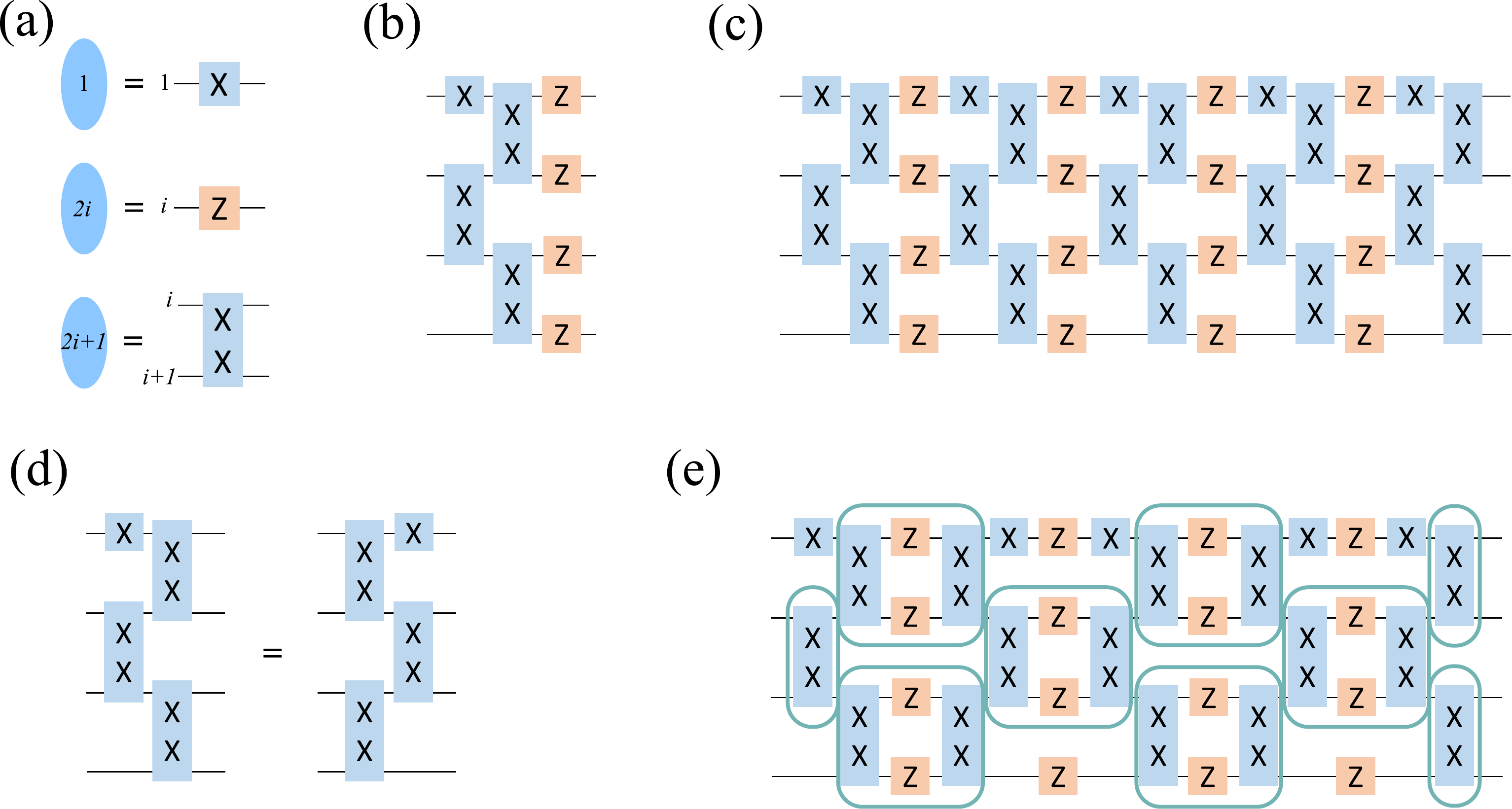}
    \caption{In panel (a), we show the block mapping that covers the free fermions with creation. By using these, we can build fermionic swaps, and carry the creation-annihilation operator on the first qubit to any other qubit. Panel (b) show the complete set of blocks for 4 qubits. In panel (c), we show the final circuit representing the square structure with this particular block mapping, that can be obtained by using the compression theorems. In this form, the circuit requires $2(n-1)(n+1) = 2(n^2-1) = O(2n^2)$ CNOT gates. By using the relation given in panel (d) on every other $X-XX$ vertical stripe, certain $XX$ gates can be grouped as shown in panel (e). Then using the relation \cref{eq:tfim2tfxy}, these groups can be transformed into free fermionic gates. After this simplification, the number CNOTs is reduced to $(n-1)(n+2) = O(n^2)$, which is approximately half of the CNOT count of the circuit in panel (c).}
    \label{fig:cTFIM_blocks}
\end{figure*}

In this section, we demonstrate that the preparation of a product state of any number of free fermionic states can be compressed. 
Consider the following Hamiltonian:
\begin{align}\label{eq:hamannihilation}
    \ham(t) =& \sum_{i,j} \big( h_{ij}(t) \, c_i^\dagger c_{j} + p_{ij}(t) \, c_i c_{j} \big) + \hc \nonumber \\
    &+ \sum_i \big( \mathbf{q_i(t) \, c_i +  q_i(t)^* c_i^\dagger} \big),
\end{align}
where $h_{ij}(t)$, $p_{ij}(t)$ and $q_i(t)$ are complex functions of time.
The difference between this Hamiltonian and 
\cref{eq:free_fermion_ham} is the presence of the lone operators $c_i$ and $c_i^\dagger$, which are highlighted in bold, and annihilate (resp. create) fermions on site $i$. These terms cannot be represented by the standard TFIM and TFXY mappings from \cref{eq:tfim_block_mapping,eq:tfxy_block_mapping}. 

The evolution under the Hamiltonian given in \cref{eq:hamannihilation} yields full control over the particle content of a fermionic state, and allows us to generate any fermionic Gaussian state, i.e. fermion product state \cite{surace2022fermionic,collura2024quantum}. The pair creation terms $c_i c_j$ allow us to change the particle number by 2, and a repeated application of them leads to a change of particle count by an even number. 
The addition of $c_i$ terms allows us to change the particle content by an odd number as well, giving us full control. In fact, using the quadratic terms $c_i^{(\dagger)} c_j^{(\dagger)}$ as basis change and Bogolyubov transformations, by $c_i^{(\dagger)}$, we can change the particle content in any given basis, and add/subtract as many particles as desired in a unitary fashion. In \cref{asec:creating_fermions}, we show how to use the evolution under the Hamiltonian \cref{eq:hamannihilation} to add/subtract fermions in the momentum basis.

To compress the time evolution of \cref{eq:hamannihilation}, we extend the TFIM mapping as follows: 
\rredtext{\begin{align}\label{eq:creation_tfim_mapping}
\begin{split}
    B^\CTFIM_1(\theta) &= e^{i \theta X_1} \\
    B^\CTFIM_{2i}(\theta) &= e^{i \theta Z_{i}},\\
    B^\CTFIM_{2i+1}(\theta) &= e^{i \theta X_{i} X_{i+1}},
\end{split}
\end{align}
where the superscript $\CTFIM$ is short for creation TFIM, indicating that this mapping contains creation and annihilation of free fermions, and implements free fermions via TFIM terms.
An illustration of this mapping is given in Fig. \ref{fig:cTFIM_blocks}(a).
Note that this mapping is the same as the TFIM block mapping given in \cref{eq:tfim_block_mapping} with the addition of $B^\CTFIM_1$ and a shift of block indices, i.e. $B^\CTFIM_{i+1} = B^\TFIM_i$ for $i=1,2,3,\dots$. We will show that this mapping covers all interaction terms in \cref{eq:hamannihilation}, and that they satisfy the \bblock\ rules given in \cref{def:block}.}

\rredtext{The set of gates given in \cref{eq:creation_tfim_mapping} enables us to implement all the terms of the Hamiltonian given in \cref{eq:hamannihilation}. The operator in the exponent of the new block $B_1^{\CTFIM}$ is $X_1 \equiv c_1 + c_1^\dagger$. Together with $B^\CTFIM_2(\theta) = e^{i \theta Z_1}$, this new term enables this mapping to cover any Hermitian linear combination of $c_1$ and $c_1^\dagger$ via the following}
\begin{align}
\begin{split}
B^\CTFIM_2(\theta) \:  X_1& \: B^\CTFIM_2(-\theta) \\ =& \cos 2\theta \: X_1 + \sin 2\theta \: Y_1  \\ 
=& e^{2i\theta} c_1 + e^{-2i\theta} c_1^\dagger.
\end{split}
\end{align}
\rredtext{Thus, the $c_1^{(\dagger)}$ term in the Hamiltonian \eqref{eq:hamannihilation} for any complex function of time $q_1(t)$ can be written via the mapping given in \cref{eq:creation_tfim_mapping}. Using the fermionic swap gates, which are part of the TFIM mapping as shown previously, we can move the $c_1^{(\dagger)}$ term around to generate any $c_i^{(\dagger)}$ term for any function $q_i(t)$ by using \cref{eq:fermion_shift}.}
For example, $c^{(\dagger)}_4$ can be written in terms of $c^{(\dagger)}_1$ and fermionic swap operators as,
\begin{align}\label{eq:fermion_carry}
    c^{(\dagger)}_4 = \fswap_{3,4}\fswap_{2,3}\fswap_{1,2} \ c^{(\dagger)}_1 \ \fswap_{1,2}^\dagger\fswap_{2,3}^\dagger\fswap_{3,4}^\dagger,
\end{align}
which we illustrate as,
\begin{align}
    \includegraphics[width = .75 \columnwidth]{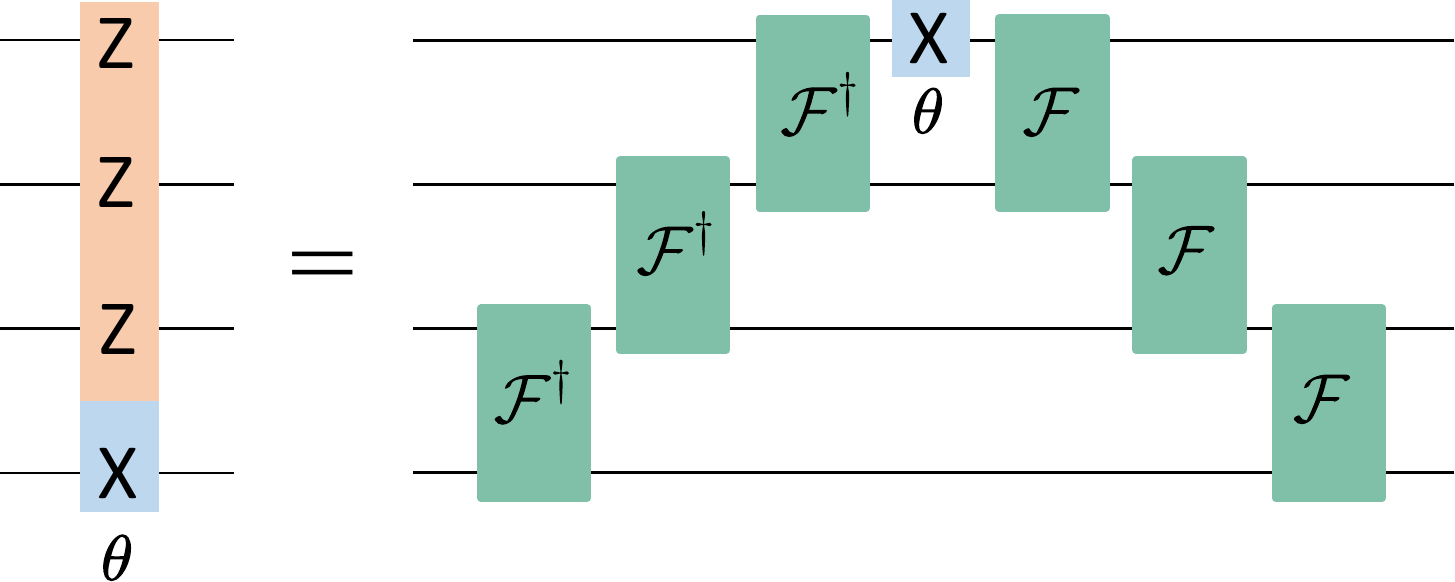}
\end{align}
On the left hand side representing $c^{(\dagger)}_4$, we have a rotation generated via Pauli $X_4$ with a $Z$-tail due to the Jordan-Wigner transformation~\eqref{eq:JW}.

\rredtext{
The mapping in \cref{eq:creation_tfim_mapping} satisfies all \bblock\ rules given in \cref{def:block}. We already know that $\{B^\CTFIM_i, i \geq 2 \}$ is a \bblock \: mapping, thus we only need to check if the $B^\CTFIM_1$ satisfies fusion, commutation and turnover rules. Fusion is satisfied since we have
\begin{align}
    B^\CTFIM_1(\alpha) B^\TFIM_1(\beta) = B^\CTFIM_1(\alpha + \beta).
\end{align}
Commutation with $B^\CTFIM_i$ for $i > 3$ is trivially satisfied since they do not share any qubits, and it commutes with $B^\CTFIM_3$ since $X_1$ commutes with $X_1 X_2$ even though they both act on qubit 1. Finally, the turnover property is satisfied directly because of the following: the operators in the exponents of $B_1^\CTFIM$ and $B_2^\CTFIM$, i.e., $X_1$ and $Y_1$, form the following representation of $\mathfrak{su}(2)$:
\begin{align}
    \mathfrak{su}(2) \equiv i \mathrm{span} \{ X_1, Z_1, Y_1 \}.
\end{align}
The Euler decomposition of this represantation yields that there exist $a,b,c \in \mathbb{R}$ for any $\alpha, \beta, \gamma \in \mathbb{R}$ such that 
\begin{align}
    e^{i a X_1} e^{i b Z_1} e^{i c X_1} = e^{i \alpha Z_1} e^{i \beta X_1} e^{i \gamma Z_1}, 
\end{align}
which is equivalent to
\begin{align}
\begin{split}
    B^\CTFIM_1(a) &B^\CTFIM_2(b) B^\CTFIM_1(c) \\ = &B^\CTFIM_2(\alpha) B^\CTFIM_1(\beta) B^\CTFIM_2(\gamma), 
\end{split}
\end{align}
where the corresponding angles can be calculated via \cite[Eqs. (29) and (30)]{kokcu2022algebraic}.
Thus, $B^\CTFIM_1$ satisfy the turnover property as well, and the mapping \cref{eq:creation_tfim_mapping} is a \bblock \: mapping.
}

This new mapping enables compression of
single particle creation/annihilation \rredtext{together with pair creation/annihilation and fermion hoppings. Using the compression theorem for \bblock s, the fixed depth circuits for evolution under the Hamiltonian \cref{eq:hamannihilation} are given in \cref{fig:cTFIM_blocks}. Panel (a) illustrates the CTFIM mapping given in \cref{eq:creation_tfim_mapping}, where we have blocks on the left hand side, and gates on the right hand side. In panel (b), we illustrate a full list of blocks for $n = 4$ qubits. As it can be seen, there are $2n = 8$ blocks. In panel (c), we show the circuit obtained from the square structure (see \cite{kokcu2022algebraic, camps2022algebraic} and \cref{fig:all_block_things}(e)) by using the block mapping given in panel (a). In this form, the circuit contains $(n+1)(n-1) = n^2-1$ $XX$ gates, which yields $2n^2-2$ CNOTs. This CNOT count can be reduced by moving $X_1$ and $X_i X_{i+1}$ gates around as shown in panel (d). This way, certain $XX$ gates can be brought and grouped together as shown in panel (e).
These groups then form TFIM triangles, and can be transformed into a TFXY block or a free fermionic gate by \cref{eq:tfim2tfxy}.
In this way, each group will cost only 2 CNOTs rather than 4. In total, the final CNOT count of the circuit given in panel (e) tuns out to be $(n-1)(n+2)$, which is approximately half of the CNOT count of the circuit in panel (c).
}

We will use an adiabatic state preparation example to demonstrate the compression of free fermions with creation and annihilation.
We will be generating the ground state of the following Hamiltonian
\begin{align}\label{eq:tight_binding}
\mathcal{H}_0 = -\tilde{t} \sum_{i=1}^{n-1}   \left(c^\dagger_i c^{\phantom{\dagger}}_{i+1} + c^\dagger_{i+1} c^{\phantom{\dagger}}_{i} \right) -\mu \sum_{i=1}^{n}   c^\dagger_i c^{\phantom{\dagger}}_{i},
\end{align}
for $\tilde{t} = 1$ and $\mu = 0$. For large negative values of $\mu$, the chemical potential term will dominate and the ground state will be the empty state $\ket{000...0}$, which is easy to prepare. Thus, if we change $\mu$ sufficiently slowly, by the adiabatic theorem we can prepare the ground state of the Hamiltonian with $\mu=0$ as long as the energy gap between the ground state and the first excited state does not become zero. This method is called adiabatic state preparation (ASP).

ASP comes with competing challenges for the current noisy hardware. 
First, the Hamiltonian needs to be changed adiabatically ---  requiring a small $d\mu/dt$ and a long evolution time $T$.
At the same time, due to the Trotter decomposition,
the time step $\delta t$ should be small. Altogether, this leads to a large number of Trotter steps $r = T / \delta t \gg 1$. For the non-compressable cases, it is difficult to simulate $r \gg 1$ due to hardware noise.  

A second challenge to adiabatically prepare the ground state of $\ham_0$ stems from the symmetries of the Hamiltonian. 
The Hamiltonian \eqref{eq:tight_binding} conserves particle number,
which makes the adiabatic state preparation approach difficult.
When $\mu = 0$ the Hamiltonian \eqref{eq:tight_binding} is particle-hole symmetric, and the ground state will be half filled. However, the initial state has no particles in it. Thus, without a symmetry breaking term, it is not possible to reach the ground state of $\ham_0(\mu = 0)$ since evolution under \cref{eq:tight_binding} does not change the particle number. 
More generally, the system has protected level crossings in the spectrum which makes the adiabatic state preparation impossible, unless an additional symmetry breaking term is added to the Hamiltonian to open gaps at those level crossing points.

In the TFXY model (which is equivalent to
\cref{eq:tight_binding} after a Jordan-Wigner transformation),
a global transverse $X$-direction magnetic field can generate
gaps in the spectrum \cite{francis2022adiabatic}.
A global magnetic field is however not compressible; a field on the second site ($X_2 = Z_1 \big(c_2 + c^\dagger_2 \big)$) is cubic in fermion terms, which makes the dynamical Lie algebra grow exponentially with the system size.
However, as we have shown, a field on the first site ($c_1 + c^\dagger_1 = X_1$) is compressible. Thus, we will add a single term that  creates/annihilates particles on site 1,
\begin{align}
    \ham'(t) = \lambda(t) \big( c_1 + c^\dagger_1 \big),
\end{align}
which in spin language corresponds to a local magnetic field in $X$ direction. This is sufficient to generate gaps between the ground state and the first excited state. Thus, the addition of $\ham'$ addresses both challenges listed above: it opens up gaps by breaking the particle conservation symmetry, and its evolution together with $\ham_0(\mu)$ is compressible into a circuit that is independent of the number of Trotter steps $r$ which needs to be large for ASP.

%%% Figure: Adiabatic State Preparation
\begin{figure}[t]
    \centering
    \includegraphics[clip=true,trim=0 10 0 0,width = 0.99\columnwidth]{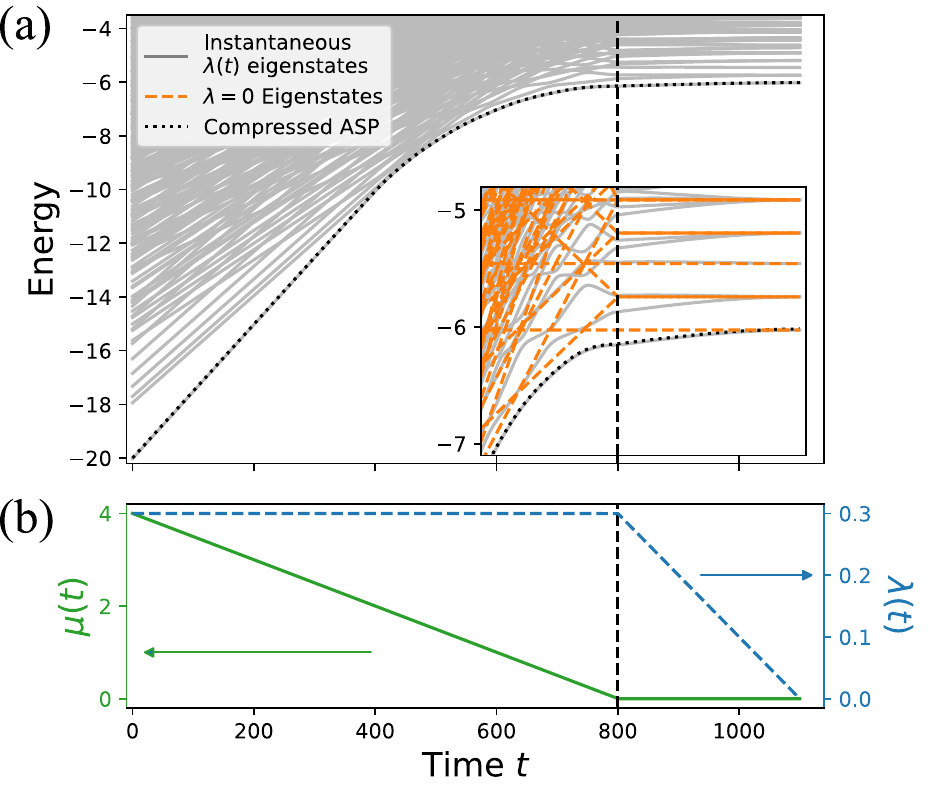}
    \caption{Numerical results for the adiabatic state preparation for the 1D chain via fermionic compression.
    (a) Instantaneous eigenstates and the result of the compressed time evolution with the parameters discussed in the text. Inset: a close-up view near the end of the evolution, with the instantaneous eigenvalues of $\ham_0$ (orange dashed lines) and $\ham=\ham_0 + \ham'$ (gray lines).
    (b) Time evolution of the chemical potential $\mu(t)$ and symmetry-breaking field $\lambda(t)$.}
    \label{fig:ASP_results}
\end{figure}

%%% Figure: Q-Block Rules
\begin{figure*}[htpb]
    \centering
    \includegraphics[width = 1.5 \columnwidth]{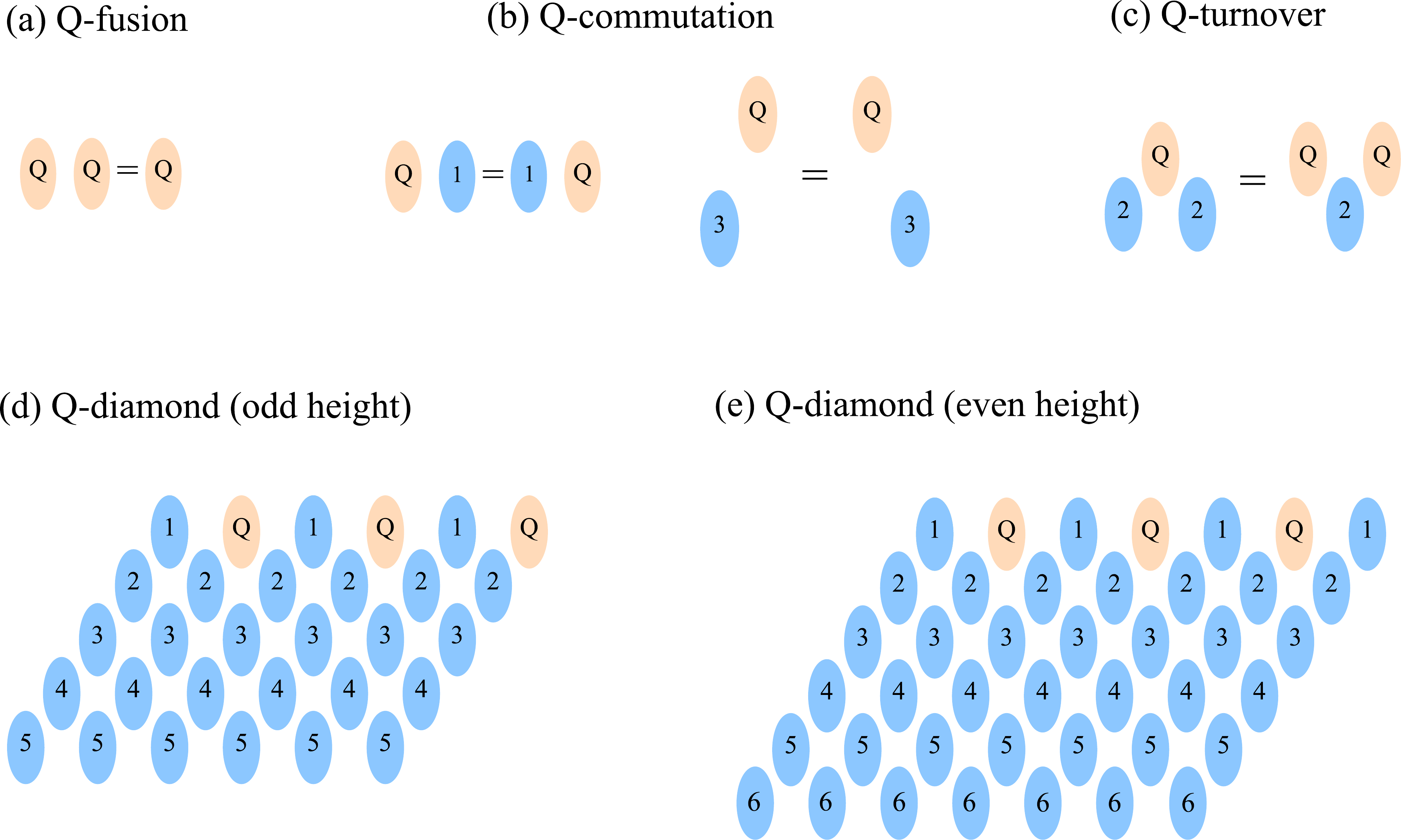}
    \caption{(a-c) $Q$-block properties given in \cref{def:q-block}. (d-e) The diamond structure defined in \cref{def:q-diamond} with heights $n=5$ and $n=6$. Notice that the alternating nature of $B_1$ and $Q$ in the diamond makes the top right corner of the odd height and even height diamonds differ from each other. This leads to different compression sequences for even-odd height diamonds.
    }
    \label{fig:all_qblock_things}
\end{figure*}

\cref{fig:ASP_results} shows the numerical results of adiabatic state preparation of the ground state of $\ham_0$ for $n=10$ sites via state vector simulation.
We initiate our state with no particles $\ket{\psi(t = 0)} =\ket{000...0}$ with $\lambda(t=0) = 0.3$ and $\mu(t = 0) = -4$. The amplitude of the chemical potential is large enough to ensure that the $0-$particle state has large overlap with the ground state, even with non-zero $\lambda$. We evolve the state with $d\mu/dt = -0.005$ and constant $\lambda = 0.3$ to keep the gaps open until we reach $\mu(t) = 0$ at $t=800$. As a final step, we slowly turn off $\lambda$ with $d\lambda/dt = -0.001$ and $\mu = 0$ until we reach $\lambda = 0$ to obtain the true ground state of $\ham_0$ in \cref{eq:tight_binding} at $\mu=0$. 

As can be seen in \cref{fig:ASP_results}, the energies obtained through adiabatic evolution compare well to the instantaneous ground state energies of $\ham=\ham_0 + \ham'$ with $\lambda = 0.3$. In the inset, it is clear that the ground and the first excited state $\ham_0$ cross frequently --- this is due to particle number conservation of $\ham_0$. The nonzero $\lambda$ breaks the symmetry and opens up a gap at each of these level crossings between the ground and the first excited states. However, the gap is relatively small, therefore rate of change of $\mu$
must be small. 

Satisfying the requirements of slow changes in the Hamiltonian, small $\delta t$, and a symmetry breaking field is enabled by the compression algorithm, which allows for arbitrarily long evolution at a fixed depth. Thus, both $\delta t$ and $d\mu/dt$ can be as small as desired
in the Trotter expansion. As a simple first order Trotter circuit without compression, using $\delta t = 0.4$ until $t=800$ and $\delta t = 0.2$ for the remainder, this circuit consists of 3500 Trotter steps and 315,000 CNOTs.  After compression, the circuit is structured as
shown in \cref{fig:cTFIM_blocks} with only 108 CNOTs.

\section{New type of block: Q-Block}\label{sec:newblock_qblock}

In this section, we introduce a new type of block, \textit{Q-block}, which interacts with the blocks via \textit{Q-block rules}. In the end, the usage of $Q$-block together with $B$-blocks will allow us to compress a singly controlled free fermionic evolution with the following Hamiltonian
\begin{align}\label{eq:complete_ham}
\begin{split}
    \ham(t) &= \sum_{i,j} \big( h_{ij}(t) \, c_i^\dagger c_j + p_{ij}(t) \, c_i c_j + q_i(t) c_i\big) \\
    + \mathbf{Z_0} \, & \sum_{\mathbf{i,j}} \mathbf{\big( h'_{ij}(t) \,  c_i^\dagger c_j + p'_{ij}(t) \,  c_i c_j + q'_i(t)  c_i\big)} + \hc\, ,
\end{split}
\end{align}
where the $0$-th qubit is the control qubit. We also show that 1-D spin models TFXY anf TFIM with periodic boundary condition are equivalent to singly controlled free fermions, and provide $Q$-block mappings for these models as well. Finally, we apply our method to calculate the Zak phase of the free Creutz-Hubbard model by calculating the overlap of the ground states of the Creutz-Hubbard model with different parameters that are generated via adiabatic state preparation.

\subsection{Q-block and Q-compression}
\label{sec:qblock-qcompression}

We define a \qblock\ as a mathematical operator that satisfies the following three relations.

\begin{definition}[\qblock]
\label{def:q-block}
Given \bblock s $B_i$ with $i \geq 1$, define a \emph{\qblock}, $Q=Q(\myvec{\theta})$, as an operator that satisfies:
\begin{enumerate}
  \item \textbf{$Q$-fusion:} For any set of parameters $\myvec{\alpha}$ and $\myvec{\beta}$, there exist $\myvec{a}$ such that
        \begin{equation}
          Q(\myvec{\alpha}) \, Q(\myvec{\beta}) = Q(\myvec{a}),
        \end{equation}
  \item \textbf{$Q$-commutation:} For any set of parameters $\myvec{\alpha}$ and $\myvec{\beta}$
        \begin{equation}
          Q(\myvec{\alpha}) \, B_i(\myvec{\beta}) = B_i(\myvec{\beta}) \, Q(\myvec{\alpha}), \qquad i \neq 2,
        \end{equation}
  \item \textbf{$Q$-turnover:} For any set of parameters $\myvec{\alpha}$, $\myvec{\beta}$ and $\myvec{\gamma}$  there exist $\myvec{a}$, $\myvec{b}$ and $\myvec{c}$ such that 
        \begin{equation}
          B_2(\myvec{\alpha}) \, Q(\myvec{\beta}) \, B_2(\myvec{\gamma}) = Q(\myvec{a}) \, B_2(\myvec{b}) \, Q(\myvec{c}).
        \end{equation}
\end{enumerate}
If the $Q$- and $B_i$-blocks satisfy the properties listed above, we will say that $\{Q,B_i\}$ is a \qblock\ mapping.
\end{definition}

The three \qblock\ properties are illustrated in~\cref{fig:all_qblock_things}(a-c). The \qblock s are not explicitly labeled with an index, as they can  only appear in one level, which is at the same height as the first row of \bblock s (see~\cref{fig:all_qblock_things}). \redtext{As can be seen, the \qblock\ and the block with index 1, i.e. $B_1$, satisfy the same properties, and the set of \bblock\ and \qblock\ rules are symmetric under the exchange ${Q} \leftrightarrow B_1$. This property emerges in the examples we provide in \cref{subsec:controlled_free_fermions,subsec:controlled_free_fermions_creation}, where the \qblock\ is mapped to a term that corresponds to the controlled version of $B_1$.}

We are now ready to define a \qdiam, which is the minimal structure for a quantum circuit that admits a \qblock\ mapping.
\begin{definition}[\qdiam]
\label{def:q-diamond}
Define a ``\qdiam'' of height $n$ as,
\begin{align*}
  D^Q_n&({\myvec{\alpha}}) := \\ &\prod_{m=1}^{(n+1)/2} \left[ \prod_{i=n \downarrow}^{1}B_i(\myvec{\alpha}_{i,2m-1})
   \prod_{i=n \downarrow}^{2} B_i(\myvec{\alpha}_{i,2m}) Q(\myvec{\alpha}_{1,2m}) \right],
\end{align*}
for odd $n$, and
\begin{align*}
  D^Q_n&({\myvec{\alpha}}) := \\ & \prod_{m=1}^{n/2} \Bigg[ \prod_{i=n \downarrow}^{1} B_i(\myvec{\alpha}_{i,2m-1})
  \prod_{i=n \downarrow}^{2} B_i(\myvec{\alpha}_{i,2m}) Q(\myvec{\alpha}_{1,2m}) \Bigg] \\ & \prod_{i=n \downarrow}^{1} B_i(\myvec{\alpha}_{i,n+1}),
\end{align*}
for even $n$. Here $\downarrow$ in the product means that the multiplication is done in the decreasing order, $B_i$ are \bblock s, $Q$ is a \qblock\ and each term in the product can have different parameters.
\end{definition}

%%% Figure: Q-Block Compression
\begin{figure*}[htpb]
    \centering
    \includegraphics[width = 1.9 \columnwidth]{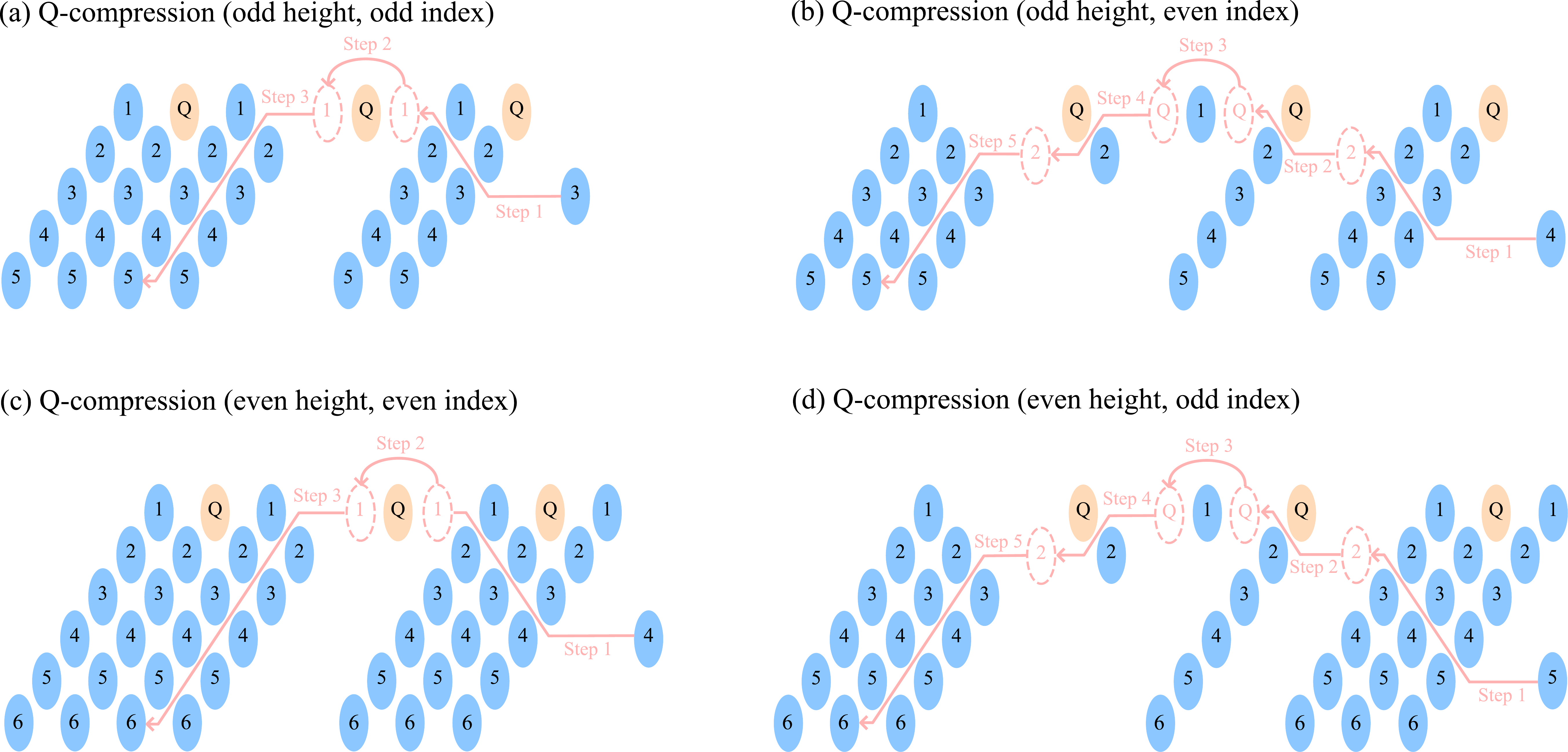}
    \caption{An illustrative proof of \cref{thm:q-compression}, i.e. how the diamond can absorb a block. Because the $Q$-diamond structure differs when the height is even or odd, and the operations to absorb a block changes when the indes $i$ is even or odd, we illustrate each case in different panels. In panels (a-b), the size of the diamond is $n=5$, and the index of the block is $i=3$ on panel (a), and $i=4$ on panel (b). In panels (c-d), the size of the diamond is $n=6$, and the index of the block is $i=4$ on panel (c), and $i=5$ on panel (d). In panels (a) and (c), the block is absorbed by the $Q$-diamond structure only via turnover, $Q$-commutation and fusion operations. On the contrary, the panels (b) and (d) illustrates the cases when a $Q$-turnover is also required.}
    \label{fig:all_qblock_things2}
\end{figure*}

An illustration of a \qdiam\ of height $5$ and $6$ are shown in~\cref{fig:all_qblock_things}(d) and (e). Next, we prove that a \qdiam\ is indeed a minimal realization for a quantum circuit that admits a \qblock\ mapping by showing that the \qdiam\ can absorb any \bblock\ and \qblock.
\begin{theorem}[$Q$-compression]\label{thm:q-compression}
A \qdiam\ of height $n$ can be merged with any \bblock\ $B_i$ with $i=1,2,...,n$ and \qblock\ $Q$,
\begin{align}
\begin{split}
        &D^Q_n({\myvec{\alpha}},{\myvec{\beta}})\,Q(\myvec{\gamma}) = D^Q_n({\myvec{a}},{\myvec{b}}), \\
        &D^Q_n({\myvec{\alpha}},{\myvec{\beta}}) B_i(\myvec{\theta}) = D^Q_n({\myvec{u}},{\myvec{v}}).
\end{split}
\end{align}
\redtext{For odd values of $n$, merging $Q$ with the diamond requires only one $Q$-fusion, while merging $B_1$ requires $n-1$ turnovers and $1$ fusion. For even values of $n$, merging $Q$ with the diamond }requires $n-1$ turnovers and $1$ fusion, while merging $B_1$ requires $1$ fusion. Merging $B_i$ with $i>1$ requires $n-i-2$ turnover and 1 merge operations when $n-i$ is even, and $n-i-4$ turnover, 2 $Q$-turnover and 1 merge operations when $n-i$ is odd.
\end{theorem}

\begin{proof}
The proof is illustrated diagrammatically in \cref{fig:all_qblock_things2}. As can be seen, the trajectory that the newly added block follows is similar for both even-odd index and even-odd height cases, with the only difference being the operations that happen on the first row of the diamond.

When the parity of the height and the index $i$ is the same, as demonstrated in \cref{fig:all_qblock_things2}(a) and (c), the $Q$-compression occurs in 3 steps. \textbf{Step 1:} a \bblock\ with odd index $i$ is moved all the way to the first row and its index becomes 1 by means of $i-2$ \bblock\ turnover operations. \textbf{Step 2:} using the $Q$-commutation relation $B_1 Q = Q B_1$, it can be passed through $Q$-block. \textbf{Step 3:} then it gets pushed all the way down via $n-1$ turnover operations, and is fused with the corresponding bottom-most block. This adds up to $n-i-2$ turnover operations and 1 fusion operation.

When the height and the index have different parities, as demonstrated in \cref{fig:all_qblock_things2}(b) and (d), the $Q$-compression occurs in 5 steps. \textbf{Step 1:} The \bblock\ with index $i$ is first moved to the second row via $i-2$ repeated \bblock\ turnover operations. 
\textbf{Step 2:} Then, it becomes a \qblock\ via a $Q$-turnover operation. \textbf{Step 3:} it passes through the block with index 1 using $Q$-commutation rule. \textbf{Step 4:} it again becomes a \bblock\ with index 2 using a $Q$-turnover operation. \textbf{Step 5:} it moves all the way down using $n-2$ \bblock\ turnover operations, and is fused with the corresponding bottom-most block. This requires $n-i-4$ \bblock\ turnover operations, 2 $Q$-turnover operations, and 1 fusion operation. 
\end{proof}

An alternative type of block, which we denote ``\pblock", its block rules and its compression algorithm is presented in \cref{subsec:pblock_rules_compression}. These rules yield results comparable to those obtained with the \qblock\ rules. However, they are not as generalizable as $Q$-block rules.

\subsection{\qblock\ mapping for Controlled Free Fermions}
\label{subsec:controlled_free_fermions}

In this section, we will provide a $Q$-block mapping that allows us to compress controlled time evolution circuits for any free fermionic model.  
We will first simplify the controlled time evolution of~\cref{eq:complete_ham}  without the annihilation and creation terms, i.e. the evolution under the following Hamiltonian,
\begin{equation}
\begin{split}
    \ham(t) = & \sum_{i,j} \big( h_{ij}(t) \, c_i^\dagger c_j + p_{ij}(t) \, c_i c_j\big) \\
    + \mathbf{Z_0} & \sum_{\mathbf{i,j}} \big( \mathbf{h'_{ij}(t) \, c_i^\dagger c_j + p'_{ij}(t) \, c_i c_j} \big)  + \hc ,
\end{split}
\label{eq:simple_controlled}
\end{equation}
%
% generates the following controlled triangle structure,
% \rvb{big step: maybe make connection to figures?}
where the new terms are highlighted. 
We provide the following  $Q$-block mapping for this Hamiltonian:
\begin{align}\label{eq:simple_controlled_qblocks}
\begin{split}
    B^\TFIM_{2i-1}(\theta) &= e^{i \theta Z_{i}},\\
    B^\TFIM_{2i}(\theta) &= e^{i \theta X_{i} X_{i+1}},\\
    Q^\TFIM(\theta) &= e^{i \theta Z_0 Z_1},
\end{split}
\end{align}
which is illustrated in Fig.~\ref{fig:controlled_TFIM_blocks}(a).
We will show that this mapping indeed covers every interaction term of the Hamiltonian given in \cref{eq:simple_controlled}, and it satisfies all \bblock\ and \qblock\ rules.

Let us show that the set of gates given in \cref{eq:simple_controlled_qblocks} covers all of the new terms given in \cref{eq:simple_controlled}.
When a 1st order Trotter formula is considered, the new terms in bold generate the following structure
\begin{align}\label{eq:chandelier}
\vcenter{\hbox{\includegraphics[width = 0.5\columnwidth]{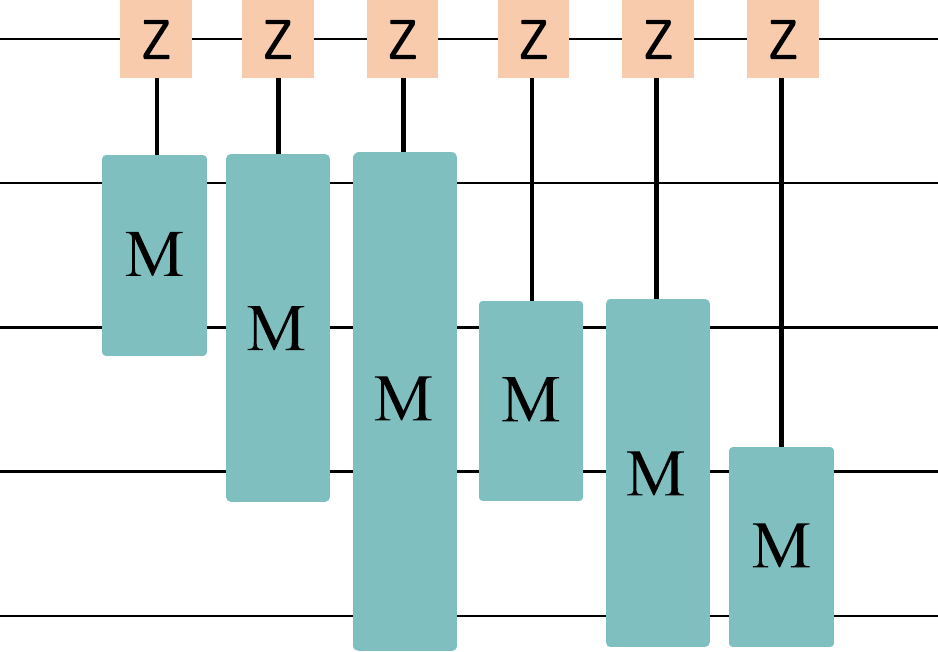}}}\, ,
\end{align}
with the $Z$-rotation on the top qubit, which is the ancilla qubit with index $0$, coupled to the free fermionic gates from \cref{eq:generalized_matchgate} on every pair of system qubits $i,j$
%attached to $Z_0$ 
representing the evolution under the new quadratic terms in \cref{eq:simple_controlled}. For example,
\begin{equation}
\begin{aligned}
\label{eq:BTFXY1}
\begin{gathered}
        \includegraphics[width = 0.15\columnwidth, trim={0 0 25cm 0}, clip]{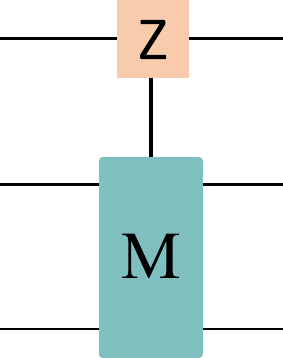}
\end{gathered}
\equiv \ & e^{i \theta_1\: Z_0 Z_1}e^{i \theta_2\: Z_0Z_{2}}e^{i \theta_3\: Z_0X_1 X_{2}}\\ 
         & e^{i \theta_4\: Z_0Y_1 Y_{2}}\:e^{i \theta_5\: Z_0Z_1}e^{i \theta_6\: Z_0Z_{2}},
\end{aligned}
\end{equation}
is a $Z_0$-controlled free fermionic gate on qubits $1$ and $2$. Note that the top qubit, i.e. the ancilla qubit, is labeled as the 0th qubit so that the system qubits start at 1, which is consistent with the $B$-block convention given in \cref{fig:all_block_things,fig:cTFIM_blocks}.

The gate given in \cref{eq:BTFXY1} can be generated via the mapping \cref{eq:simple_controlled_qblocks}. The terms that are not directly in the mapping are rotations with the Pauli matrices $Z_0 Z_2,\:Z_0 X_1 X_2$, and $Z_0 Y_1 Y_2$. Using the following relation
\begin{align*}
Z_0 Z_2 = \fswap_{1,2} \: Z_0 Z_1 \: \fswap_{1,2}^\dagger,
\end{align*}
we can generate $Z_0 Z_2$ rotation via FSWAP gates which are included in $B_i^\TFIM$, and a $Z_0 Z_1$ rotation which is $Q^\TFIM$. Similarly, the following relations allow us to generate the rotations with $Z_0 X_1 X_2$ and $Z_0 Y_1 Y_2$ via the mapping \cref{eq:simple_controlled_qblocks}:
\begin{align*}
Z_0 X_1 X_2 = e^{i \frac{\pi}{4} Z_1} e^{i \frac{\pi}{4} X_1 X_2 }\: Z_0 Z_1 \: e^{-i \frac{\pi}{4} X_1 X_2 } e^{-i \frac{\pi}{4} Z_1 },
\end{align*}
and
\begin{align*}
Z_0 Y_1 Y_2 = e^{i \frac{\pi}{4} Z_2} e^{i \frac{\pi}{4} X_1 X_2 }\: Z_0 Z_1 \: e^{-i \frac{\pi}{4} X_1 X_2 } e^{-i \frac{\pi}{4} Z_2 }.
\end{align*}
Thus, the controlled fermionic gates on qubits 1 and 2 given in \cref{eq:BTFXY1} can be generated via the gates given in \cref{eq:simple_controlled_qblocks}.  

\begin{figure*}[htpb]
    \centering
    \includegraphics[width = 2.1 \columnwidth]{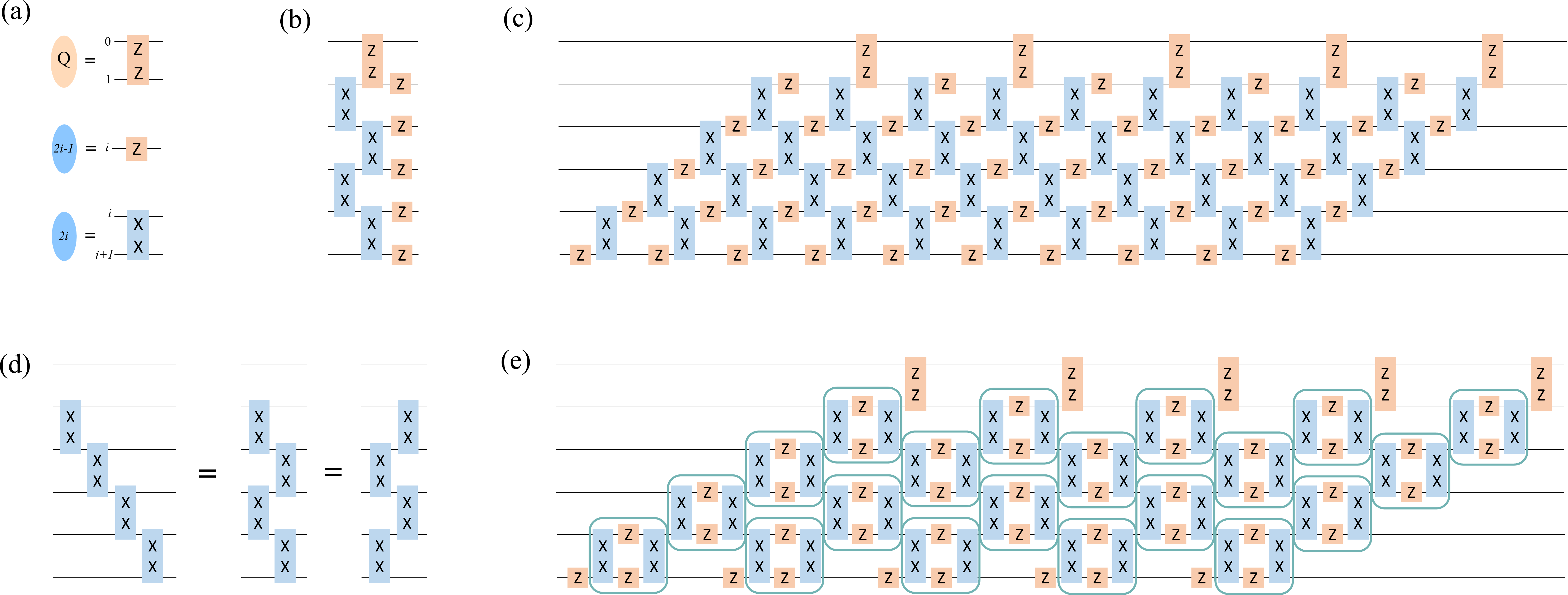}
    \caption{In panel (a), we show the $Q$-block mapping that covers the controlled free fermions, where the control qubit is the $0$th qubit. Panel (b) shows the complete set of blocks and $Q$-block for $n = 5$ system qubits and $1$ control qubit. In panel (c), we show the final circuit representing the diamond structure (see Def. \ref{def:q-diamond}) with this particular $Q$-block mapping, that can be obtained by using the compression theorems. In this form, the circuit requires $4n(n-1)+2n = O(4n^2)$ CNOT gates. By using the relation given in panel (d), certain $XX$ gates can be grouped as shown in panel (e). Then, using the relation \cref{eq:tfim2tfxy}, these groups can be transformed into free fermionic gates. After this simplification, the number CNOTs is reduced to $2n^2 = O(2n^2)$, which is approximately half of the NOT count of the circuit in panel (c).}
    \label{fig:controlled_TFIM_blocks}
\end{figure*}

A generic free fermionic gate on any pair of qubits can be generated via 
the mapping \cref{eq:simple_controlled_qblocks}, by the use of FSWAP gates and \cref{eq:fermion_carry}.  
For example, a $Z_0 c_4 c_6$ rotation can be generated via $Z_0 c_1 c_2$ and FSWAP gates in the following way,
\begin{align}\label{eq:long_range_controlled}
\vcenter{\hbox{\includegraphics[width = 0.75\columnwidth]{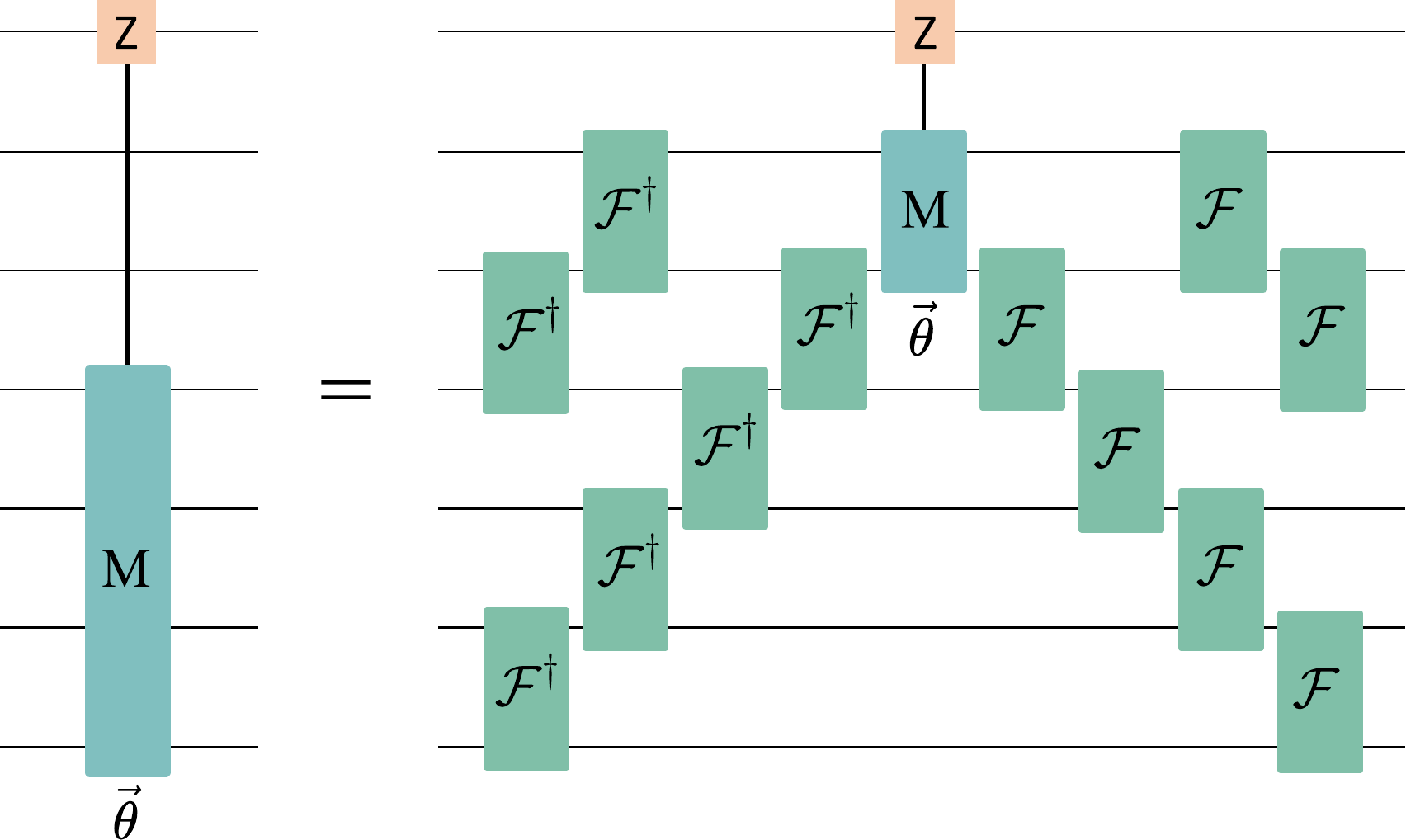}}}\, .
\end{align}
Since both FSWAP gates and the $Z_0 c_1 c_2$ gate can be generated via \cref{eq:simple_controlled_qblocks}, it follows that every term given in the Hamiltonian \cref{eq:simple_controlled} can be generated via the gates given in \cref{eq:simple_controlled_qblocks}.

Let us now show that the mapping $\{Q^\TFIM, B^\TFIM_i\}$ given in \cref{eq:simple_controlled_qblocks} is a $Q$-block mapping.
Because we know that $B^\TFIM_i$ form a block mapping, we only need to show that $Q^\TFIM$ satisfy the $Q$-block mapping rules given in \cref{def:q-block} and \cref{fig:all_qblock_things}(a-c). 
The $Q^\TFIM$ satisfies the $Q$-fusion property because of the following 
\begin{align}
Q^\TFIM(\alpha)Q^\TFIM(\beta) = Q^\TFIM(\alpha + \beta).    
\end{align}
To prove that $Q$-commutation is satisfied, we need to show that $Q^\TFIM$ commutes with all $B_i^\TFIM$ for $i \geq 3$ and $i = 1$. For $i \geq 3$, $Q^\TFIM$ and $B^\TFIM_i$ commute simply because they act on different sets of qubits. For $i = 1$, even though they share a qubit, they still commute since $[Z_0 Z_1, Z_1] = 0$. Thus, the mapping satisfies the $Q$-commutation rule.   
Finally, to confirm the $Q$-turnover relation, we can check the operators in the exponents of $Q^\TFIM$ and $B^\TFIM_2$, i.e., $Z_0 Z_1$ and $X_1 X_2$. These are two anti-commuting Pauli strings, and form the following representation of $\mathfrak{su}(2)$:
\begin{align}\label{eq:su2_relation}
    \mathfrak{su}(2) \equiv i\mathrm{span} \{ Z_0 Z_1, X_1 X_2, Z_0 Y_1 X_2 \}.
\end{align}
The Euler decomposition of this $\mathfrak{su}(2)$ yields
that there exist $a,b,c \in \mathbb{R}$ for any $\alpha, \beta, \gamma \in \mathbb{R}$ such that 
\begin{align}
    e^{i a X_1 X_2} e^{i b Z_0 Z_1} e^{i c X_1 X_2} = e^{i \alpha Z_0 Z_1} e^{i \beta X_1 X_2} e^{i \gamma Z_0 Z_1}, 
\end{align}
which is equivalent to
\begin{align}
\begin{split}
    B^\TFIM_2(a) &Q^\TFIM(b) B^\TFIM_2(c) \\ = &Q^\TFIM(\alpha) B^\TFIM_2(\beta) Q^\TFIM(\gamma), 
\end{split}
\end{align}
where the corresponding angles can be calculated via Eqs. (\bluetext{29}) and (\bluetext{30}) of \cite{kokcu2022algebraic}.
Thus, $Q^\TFIM$ and $B^\TFIM_2$ satisfy the $Q$-turnover property. We conclude that the mapping in \cref{eq:simple_controlled_qblocks} is indeed a \qblock\ mapping.

This \qblock\ mapping enables compression of the controlled free fermionic Hamiltonian given in \cref{eq:simple_controlled}.
Using the $Q$-compression algorithm given in \cref{thm:q-compression}, the fixed depth circuits generated by this mapping for $n = 5$ system qubits and 1 ancilla qubit can be found as \cref{fig:controlled_TFIM_blocks}(c) and (e). The circuit in \cref{fig:controlled_TFIM_blocks}(c) is directly obtained from the diamond structure given in \cref{def:q-diamond}. It contains $n$ $ZZ$-rotations and $2n(n-1)$ $XX$-rotations, which leads to $4n^2-2n$ CNOT gates. The number of CNOT gates can be reduced by moving $X_i X_{i+1}$ gates around as shown in \cref{fig:controlled_TFIM_blocks}(d). This reordering combines certain $XX$ gates, and transforms the circuit from \cref{fig:controlled_TFIM_blocks}(c) to (e), where certain $XX$ gates are grouped together. We can then use \cref{eq:tfim2tfxy}, and transform these groups to TFXY blocks or free fermionic gates, which can be implemented by only 2 CNOTs instead of 4. Because there are $n$ $ZZ$ gates and $n(n-1)$ groups, the final CNOT count for the circuit in panel (e) becomes $2n + 2n(n-1) = 2n^2$, which is approximately half the CNOT count of the circuit in panel (c). 

The same compression can also be achieved with the alternative $P$-block mapping given in \cref{subsec:pblock_controlled_free_fermions}.
This method is capable of generating a CNOT efficient circuit without the usage of TFIM $\leftrightarrow$ TFXY transformation, and is more efficient in the case of compression of a few elements due to the $O(n^2)$ overhead of the transformation. However, $Q$-compression is faster because it is based on TFIM compression, and $P$-compression if based in TFXY compression \cite{kokcu2022fixed,camps2022algebraic}.

{
We apply our \qblock\ mapping and $Q$-compression algorithm to the calculation of a topological phase in the imbalanced Creutz-Hubbard model, which is shown in~\cref{fig:Creutz-Hubbard}. The model has two groups of fermions labeled by $\ell \in \{u,d\}$, with on-site energy terms and hopping both within and between the $u,d$ groups.
The Hamiltonian
of the model is given as the following  \cite{junemann2017exploring}
\begin{align}\label{eq:Creutz_ham}
\begin{split}
    \ham =& \sum_{\substack{i,\ell}} \tilde{t}\Big( - c_{i+1, \ell}^\dagger c_{i, \Bar{\ell}} + i s_\ell c_{i+1, \ell}^\dagger c_{i, \ell} + \hc \Big)\\
    & + \sum_{\substack{i,\ell}} \frac{\Delta}{2} s_\ell \: c_{i, \ell}^\dagger c_{i, {\ell}},
\end{split}
\end{align}
where $\ell \in \{u,d\}$ and $\Bar{\ell}$ is the opposite choice, i.e. if $\ell = u$, then $\Bar{\ell} = d$ and vice versa. We set $s_u = 1$ and $s_d = -1$. The term proportional to $\Delta$ creates an imbalance between two parts of the ladder. 
}

%%% Figure: Creutz-Hubbard Model
\begin{figure}[t]
\centering
\begin{tikzpicture}[
	node1/.style={circle, draw, thick, minimum size=0.4cm, color=myred},
	node2/.style={circle, draw, thick, minimum size=0.4cm, color=mygreen, fill=black!10}
]
\pgfmathsetmacro{\yos}{-1.5}
\pgfmathsetmacro{\dx}{1.2}

  % Nodes in the first row
  \foreach \x in {1,2,...,3} {
  	\pgfmathtruncatemacro{\label}{Mod(2*(\x-1),8)};
    \node[node1] (node1\x) at (\dx*\x,0) {$\label$};
  }
  \node[node1, dotted] (node15) at (\dx*4,0) {$0$};
  
  % Nodes in the second row
  \foreach \x in {1,2,...,3} {
    \pgfmathtruncatemacro{\label}{Mod(2*\x-1,8)};
    \node[node2] (node2\x) at (\dx*\x,\yos) {$\label$};
  }
  \node[node2, dotted] (node25) at (\dx*4,\yos) {$1$};
  
  % Connect nodes within rows
  \draw[color=myred, thick] (node11) -- (node12) node[pos=0.5,above] {\scriptsize{$\mathrm{i} s_u \tilde{t}$}};
  \draw[color=mygreen, thick] (node21) -- (node22) node[pos=0.5,below] {\scriptsize{$\mathrm{i} s_d \tilde{t}$}};
  \foreach \x in {2} {
    \draw[color=myred, thick] (node1\x) -- (node1\number\numexpr\x+1\relax);
    \draw[color=mygreen, thick] (node2\x) -- (node2\number\numexpr\x+1\relax);
  }
  \draw[dotted,color=myred, thick] (node13) -- (node15);
  \draw[dotted,color=mygreen, thick] (node23) -- (node25);
  
  % Add the self loops
  \draw[color=myred, thick] (node11) to [in=120, out=60, loop] node[above] {\scriptsize{$s_u \Delta/2$}} (node);
  \draw[color=mygreen, thick] (node21) to [in=300, out=240, loop] node[below] {\scriptsize{$s_d \Delta/2$}} (node);
  \foreach \x in {2,...,3} {
	\draw[color=myred, thick] (node1\x) to [in=120, out=60, loop] node[above] {} (node);
    \draw[color=mygreen, thick] (node2\x) to [in=300, out=240, loop] node[below] {} (node);
  }
  \draw[dotted, color=myred, thick] (node15) to [in=120, out=60, loop] node[above] {} (node);
  \draw[dotted,  color=mygreen, thick] (node25) to [in=300, out=240, loop] node[below] {} (node);
  
  % Connect rows in crosses
  \foreach \x in {1,2}{
  	\draw[thick] (node1\x) -- (node2\number\numexpr\x+1\relax);
	\draw[thick] (node2\x) -- (node1\number\numexpr\x+1\relax);
  }
  \draw[dotted, thick] (node13) -- (node25);
  \draw[dotted, thick] (node23) -- (node15);
  
  % Add labels to figure
  \node[left of=node11] {$\bm{\ell = u}$};
  \node[left of=node21] {$\bm{\ell = d}$};
  \node[below= 0.15cm of node11] {\scriptsize{$-\tilde{t}$}};
  
\end{tikzpicture}    
\caption{The 6-site free Creutz-Hubbard model. The terms in the Hamiltonian connecting
different sites, i.e. allowing fermions to hop between different sites, are indicated by lines (dashed lines indicate the periodic terms). Self connecting lines correspond to chemical potential terms, parallel lines correspond to the tunnelling terms with phase $i s_\ell$, and the diagonal black lines correspond to the terms with no phase.}
\label{fig:Creutz-Hubbard}
\end{figure}

%%% Figure: Half-Diamond for Overlap Measurement
\begin{figure*}[htpb]
    \centering
    \includegraphics[width = 2.1\columnwidth]{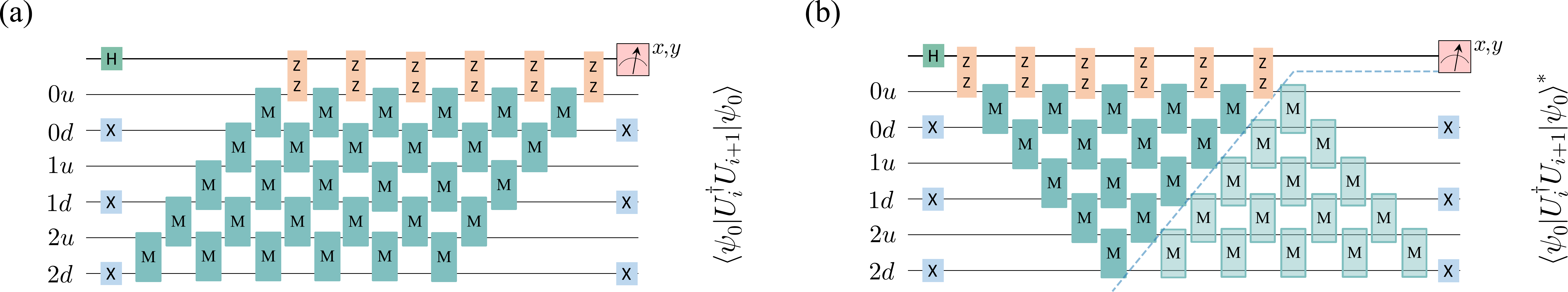}
    \caption{
    (a) The Hadamard test circuit to calculate the free fermionic overlap. The controlled free fermionic evolution is compressed down to a \qdiam structure as given in \cref{fig:controlled_TFIM_blocks}(e), where we did replace the groups with the free fermionic gates. (b) By just taking the complex conjugation of the overlap, the circuit can be inverted. Because the Hadamard test requires only the measurement of the ancilla qubit, approximately half of the circuit (the free fermionic gates below the diagonal dashed line) can be neglected.
    }
    \label{fig:simplified_diamond}
\end{figure*}

{
The model has a topological phase transition at the value of $\Delta = 4 \tilde{t}$, and the order parameter of this transition is called the ``Zak phase" \cite{junemann2017exploring}. The Zak phase is defined as the following
\begin{align}
    \phi_{\mathrm{Zak}} := %\int_{\mathrm{BZ}} dq\: \mathcal{A}(q) = 
    \int_{\mathrm{BZ}} dq\: \braket{q|\partial_q|q}, 
\end{align}
where $\ket{q}$ is the single particle state with quasi-momentum value $q$, and the integral is taken over the Brillouin zone $[-\pi,\pi]$. For the Hamiltonian given in \cref{eq:Creutz_ham}, the Zak phase can be analytically calculated to be
\begin{align}
    \phi_{\mathrm{Zak}} = \pi \theta(4 \tilde{t} - \Delta).
\end{align}
For $\Delta < 4 \tilde{t}$, the Zak phase is non-zero, and the model shows a topologically non-trivial behavior \cite{junemann2017exploring}.
}

{
Numerical calculations of the Zak phase require a discretized integral
over the Brillouin zone, which is problematic when only a few
sites are used; the available momenta for an $N$-site periodic lattice
are
$k_n = 2\pi n/N$ with $n = 0,1,\dots,N-1$. 
This problem can be circumvented by employing twisted boundary conditions~\cite{niu1985quantized,xiao2023robust}. By modifying the boundary hopping
terms with an angle $\exp{i\varphi}$
the allowed momentum values become $k_n(\varphi) = 2\pi n/N + \varphi/N$. Thus, if we vary the twist angle from $0$ to $2\pi$, we run through all momentum values, without requiring a larger unit cell, and while remaining in real space. Measuring the Zak phase then can be done by multiplying the overlap amongst the ground states of the Creutz-Hubbard Hamiltonian with different twist angels.  
}
{
To obtain the ground states, we use adiabatic evolution which starts from
the ground state of a Creutz-Hubbard model with $\Delta \gg \tilde{t}$. In this case, the imbalance term dominates, and the ground state becomes very close to the state $\ket{\psi_0} = \ket{0101\dots0101}$, where the $d$ sites are occupied and $u$ sites are empty. The $\Delta$ value is then adiabatically evolved to the desired value, to obtain
the ground state of $\ham(\varphi = 0)$ with twist angle $\varphi = 0$. Then the twist angle is adiabatically evolved to $\varphi = 2 \pi$.
\rredtext{By doing so, we produce the ground state for a given set of twist angle values $\{ \varphi_i \}$, and the desired imbalance $\Delta$ value. The Zak phase can then be approximated by the following:}
\begin{align}\label{eq:numerical_Zak}
    e^{i \phi_{\mathrm{Zak}}} = \prod_i \braket{\varphi_i |\varphi_{i+1}},
\end{align}
where $\ket{\varphi_i}$ is the ground state of the Creutz-Hubbard Hamiltonian $\ham(\varphi_i)$ with twist angle $\varphi_i$. \rredtext{This approximation becomes exact when $\varphi_i$ form a uniform continuous set.} 
}

\rredtext{
Let us define a unitary $U_i$ such that $\ket{\varphi_i} = U_i \ket{\psi_0}$ where $\ket{\psi_0} = \ket{0101\dots0101}$ is the ground state of the imbalance term. Since we generate the state via adiabatic time evolution, this $U_i$ consists of free fermionic evolution. By using the TFXY block mapping given in \cref{fig:all_block_things}(g) and the compression given in \cref{fig:all_block_things}(d), this free fermionic evolution can be compressed down to a TFXY triangle. The transition amplitude given in \cref{eq:numerical_Zak} reads $\braket{\varphi_i |\varphi_{i+1}} = \braket{\psi_0|U_i^\dagger U_{i+1}|\psi_0}$. Since both $U_i$ and $U_{i+1}$ can be represented as a TFXY triangle, $U_i^\dagger U_{i+1}$ also can be compressed into a TFXY triangle. This transition amplitude can be calculated via the Hadamard test. In order to do so, one needs to implement a circuit in which the back and forth evolution $U_i^\dagger U_{i+1}$ is implemented if and only if the ancilla qubit is in state $\ket{1}$, which requires controlled free fermionic evolution, and can be compressed into a diamond circuit as shown in \cref{fig:controlled_TFIM_blocks}.
}

%%% Figure: Zak phase results
\begin{figure}[b]
    \centering
    \includegraphics[width = 1.0\columnwidth]{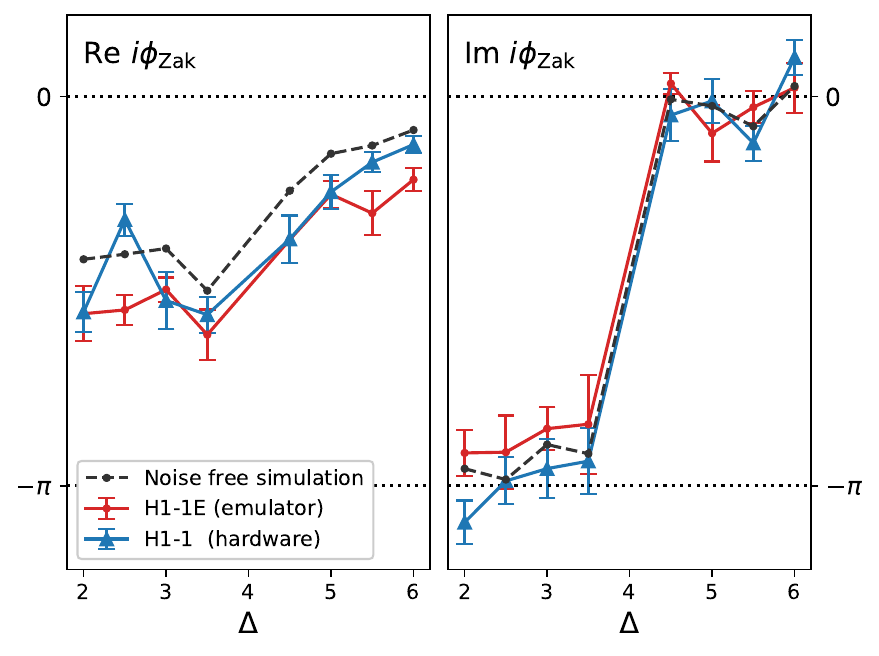}
    \caption{
    Real and imaginary parts of $i\phi_\mathrm{Zak}$
    obtained via \cref{eq:numerical_Zak}, using 5 total
    states.
    The imaginary part shows a topological phase transition,
    switching from $-\pi$ to $0$ at $\Delta=4$. The real
    part is non-zero due to the finite twist angle step $d\phi = 2\pi/5$ (see text for more details).
    }
    \label{fig:Zak_phase_plot}
\end{figure}

\rredtext{After obtaining the \qdiam\ for  the overlap $\braket{\psi_0|U^\dagger_i U_{i+1}|\psi_0}$, shown in \cref{fig:simplified_diamond}(a),
we can invert the circuit to instead compute the overlap $\braket{\psi_0|U^\dagger_i U_{i+1}|\psi_0}^*$ (\cref{fig:simplified_diamond}(b)) from which we can obtain the Zak phase via complex conjugation. As we are only measuring the ancilla qubit in a Hadamard test, we can discard the gates that have no effect on the measurement result of the inverted circuit, thereby reducing the number of CNOT gates by half.
}
\rredtext{Note that for computing the overlap in any free fermionic calculation, the Hadamard test circuit will have the same \qdiam\ form, and can be simplified via this complex conjugation method.}

\rredtext{
Our detailed protocol is as follows.
We first initialize our system with a large value of $\Delta_0 = 200 \tilde{t}$. In this case, the imbalance term dominates, and the ground state is the half-filled state where all the particles live in the down lattice sites labelled $d$. We adiabatically change $\Delta_0$ to the target value $\Delta$ for our Zak phase calculation, giving us the ground state for $\varphi = 0$. Then, we change the $\varphi$ angle from $0$ to $2\pi$. Classical simulations show that it is sufficient to use 5 values for $\varphi$,  i.e. using $d\varphi = 2\pi/5$. For all the evolution, the controlled compression is used,
and thus we have a diamond structure (c.f. Fig.~\ref{fig:controlled_TFIM_blocks}). We use these diamond structures to calculate the overlap between adjacent ground states as in the right hand side of \cref{eq:numerical_Zak}, which when multiplied yields the Zak phase for the $\Delta$.
}

\rredtext{
We first used Quantinuum's H1-1E emulator which closely mimics the H1 QPU hardware, to simulate the Zak phase of the 6-site Creutz-Hubbard model using 7 qubits and 1,000 shots to obtain the necessary statistics. The results were found to be in good agreement with the analytical results. We subsequently used Quantinuum's H1-1 quantum hardware to compute the Zak phase for the different $\Delta$ values~\footnote{Quantinuum H1-1. \href{https://www.quantinuum.com/}{https://www.quantinuum.com/}, October 26-31, 2023.}. Here we used only 200 shots to extract the Zak phase. Both the simulator and hardware results
are shown together with noise free simulation results in \cref{fig:Zak_phase_plot}. We observe that the imaginary part of the phase exhibits a jump from $-\pi$ to $0$ across $\Delta=4\tilde t$, illustrating that the topological phase transition is properly captured in these calculations. The results from the hardware are in close agreement with the emulator results and noise free results, providing confidence that the Zak phase can be accurately obtained with trapped ion quantum computers. The real part of the phase encodes the decay in the wavefunction overlap in \cref{eq:numerical_Zak}. In the limit $d\varphi \rightarrow 0$ and with perfect hardware, the real part should be identically zero. Here, due to finite $d\varphi$ and hardware noise,
some decay is present, but this does not affect the calculation
of the geometric Zak phase, i.e. $\mathrm{Im} \: i\phi_{\mathrm{Zak}}$.
}

\begin{figure*}[htpb]
    \centering
    \includegraphics[width = 2.1 \columnwidth]{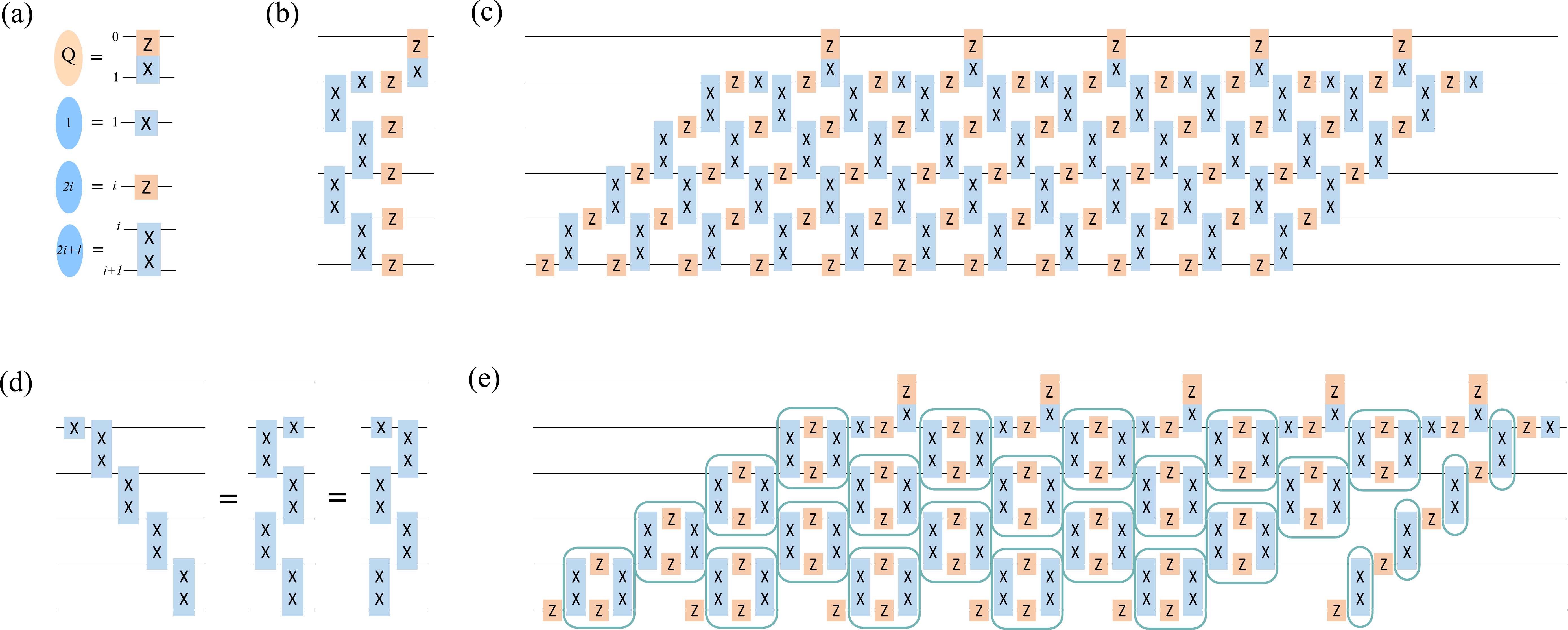}
    \caption{In panel (a), we show the $Q$-block mapping that covers the controlled free fermions with creation, where the control qubit is the $0$th qubit. Panel (b) shows the complete set of blocks and $Q$-block for $n = 5$ system qubits and $1$ control qubit. In panel (c), we show the final circuit representing the diamond structure (see Def. \ref{def:q-diamond}) with this particular $Q$-block mapping, that can be obtained by using the compression theorems. In this form, the circuit requires $4n(n-1)+4n-2 = O(4n^2)$ CNOT gates. By using the relation given in panel (d), certain $XX$ and $X$ gates can be grouped as shown in panel (e). Then, using the relation \cref{eq:tfim2tfxy}, these groups can be transformed into free fermionic gates. After this simplification, the number CNOTs is reduced to $2n^2+2n-2 = O(2n^2)$, which is approximately half of the NOT count of the circuit in panel (c).}
    \label{fig:controlled_TFIM_with_creation_blocks}
\end{figure*}

\subsection{\qblock\ mapping for Controlled Free Fermions with Creation}
\label{subsec:controlled_free_fermions_creation}

In this section, we will provide a $Q$-block mapping that allows us to compress controlled time evolution circuits for any free fermionic model, including the ones with creation and annihilation operators as discussed previously in \cref{sec:creation_annihilation} for the uncontrolled case. In this setting, the system can be described by the following Hamiltonian
\begin{equation}
\begin{split}
    \ham(t) &=  \sum_{i,j} \big( h_{ij}(t) \, c_i^\dagger c_j + p_{ij}(t) \, c_i c_j\big) + \sum_\mathbf{i} \mathbf{q_i(t) c_i} \\
    &+ {Z_0} \sum_{{i,j}} \big( {h'_{ij}(t) \, c_i^\dagger c_j + p'_{ij}(t) \, c_i c_j} \big)  \\
    &+ \mathbf{Z_0}  \sum_{\mathbf{i}}  \mathbf{q'_i(t) c_i} + \hc,
\end{split}
\label{eq:creation_controlled}
\end{equation}
where the last $\hc$ implies the hermitian conjugate of the entire right-hand side. The new terms compared to the Hamiltonian given in \cref{eq:simple_controlled} are creation/annihilation and controlled creation/annihilation terms, and are denoted with bold font. We provide the following \qblock\ mapping to compress the evolution under this Hamiltonian into a \qdiam:
\begin{align}\label{eq:ctfim_qmapping}
\begin{split}
    B^\CTFIM_1(\theta) &= e^{i \theta X_{1}},\\
    B^\CTFIM_{2i}(\theta) &= e^{i \theta Z_{i}},\\
    B^\CTFIM_{2i+1}(\theta) &= e^{i \theta X_{i} X_{i+1}},\\
    Q^\CTFIM(\theta) &= e^{i \theta Z_0 X_1},
\end{split}
\end{align}
where the superscript CTFIM stands for creation TFIM, implying that this mapping contains the fermionic creation and annihilation terms, and the free fermion terms are implemented via TFIM gates. An illustration of this mapping is given in \cref{fig:controlled_TFIM_with_creation_blocks}(a).

Let us show that every term in the Hamiltonian given in \cref{eq:creation_controlled} can be implemented via the gates given in \cref{eq:ctfim_qmapping}.
As we know from \cref{sec:creation_annihilation}, the blocks $B^\CTFIM_i$ cover all the uncontrolled terms, i.e. the terms that do not have a $Z_0$ attached. The following relation of the exponent of $Q^\CTFIM$ show that the mapping \cref{eq:ctfim_qmapping} can implement creation/annihilation of a fermion on site 1 in a controlled fashion:
\begin{align}
\begin{split}
B^\CTFIM_2(\theta) \: Z_0 X_1 \:& B^\CTFIM_2(-\theta) \\ 
=& \cos 2\theta \: Z_0 X_1 + \sin 2\theta \: Z_0 Y_1  \\ 
=& Z_0 \left(e^{2i\theta} c_1 + e^{-2i\theta} c_1^\dagger \right).
\end{split}
\end{align}
With the usage of FSWAP gates, we can then cover all controlled creation/annihilation terms with the mapping in \cref{eq:ctfim_qmapping}. The following relation shows that $Q^\TFIM$ from the TFIM \qblock\ mapping in \cref{eq:simple_controlled_qblocks} can be generated via our new CTFIM mapping: 
\begin{align}
    Z_0 Z_1 = e^{i\frac{\pi}{4} X_1} e^{i\frac{\pi}{4}Z_1} Z_0 X_1 e^{-i\frac{\pi}{4} Z_1} e^{-i\frac{\pi}{4} X_1},
\end{align}
which yields
\begin{align}
\begin{split}
    Q^\TFIM(\theta) =& B^\CTFIM_1(\pi/4) B^\CTFIM_2(\pi/4) Q^\CTFIM(\theta)\\
    &B^\CTFIM_2(-\pi/4)B^\CTFIM_1(-\pi/4).
\end{split}
\end{align}
Since $B^\CTFIM_i$ contains all $B^\TFIM_i$ as well,
the mapping \cref{eq:ctfim_qmapping} then contains the \qblock\ mapping given in \cref{eq:simple_controlled_qblocks}, and thus can implement all controlled fermion pair operations $Z_0 c_i^{(\dagger)} c_j^{(\dagger)}$ in the Hamiltonian \cref{eq:creation_controlled} as well. 

Let us show that the CTFIM mapping is a \qblock\ mapping. From \cref{sec:creation_annihilation}, we know that the CTFIM blocks $B^\CTFIM_i$ satisfy the block properties and form a \bblock\ mapping. We only need to show that $Q^\CTFIM$ satisfy the \qblock\ properties given in \cref{fig:all_qblock_things}(a-c). The $Q$-fusion property is satisfied due to the following: 
\begin{align}
Q^\CTFIM(\alpha)Q^\CTFIM(\beta) 
= Q^\CTFIM(\alpha + \beta).    
\end{align}
To prove that $Q$-commutation is satisfied, we need to show that $Q^\CTFIM$ commutes with all $B_i^\CTFIM$ for $i \geq 3$ and $i = 1$. For $i \geq 4$, $Q^\CTFIM$ and $B^\CTFIM_i$ commute simply because they act on different sets of qubits. For $i = 1$ and $3$, even though they share a qubit, they still commute since $[Z_0 X_1, X_1] = 0$ and $[Z_0 X_1, X_1 X_2] = 0$ respectively. Thus, the mapping satisfy the $Q$-commutation rule.   
Finally, the exponents of $Q^\CTFIM$ and $B^\CTFIM_2$, which are $Z_0 X_1$ and $Z_1$, form the following representation of $\mathfrak{su}(2)$:
\begin{align}\label{eq:su2_relation_creation}
    \mathfrak{su}(2) \equiv i\mathrm{span} \{ Z_0 X_1, Z_1, Z_0 Y_1 \}.
\end{align}
The Euler decomposition of this $\mathfrak{su}(2)$ yields
that there exist $a,b,c \in \mathbb{R}$ for any $\alpha, \beta, \gamma \in \mathbb{R}$ such that 
\begin{align}
    e^{i a X_1 X_2} e^{i b Z_0 Z_1} e^{i c X_1 X_2} = e^{i \alpha Z_0 Z_1} e^{i \beta X_1 X_2} e^{i \gamma Z_0 Z_1}, 
\end{align}
which is equivalent to
\begin{align}
\begin{split}
    B^\TFIM_2(a) &Q^\TFIM(b) B^\TFIM_2(c) \\ = &Q^\TFIM(\alpha) B^\TFIM_2(\beta) Q^\TFIM(\gamma), 
\end{split}
\end{align}
where the corresponding angles can be calculated via Eqs. (\bluetext{29}) and (\bluetext{30}) of \cite{kokcu2022algebraic}.
Thus, $Q^\CTFIM$ and $B^\CTFIM_1$ satisfy the $Q$-turnover property as well. We conclude that the mapping given in \cref{eq:ctfim_qmapping} is indeed a \qblock\ mapping, and by \cref{thm:q-compression}, time evolution under the Hamiltonian in \cref{eq:creation_controlled} can be compressed into a \qdiam. 

\cref{fig:controlled_TFIM_with_creation_blocks} illustrates the \qblock\ mapping and the fixed depth circuits obtained by the $Q$-compression theorem \cref{thm:q-compression}. \cref{fig:controlled_TFIM_with_creation_blocks}(a) illustrates \cref{eq:ctfim_qmapping} diagrammatically, where on the left hand side we have the blocks, and on the right hand side we have the gates. \cref{fig:controlled_TFIM_with_creation_blocks}(b) illustrates all blocks and \qblock\ on $n=5$ system and 1 ancilla qubits for the Hamiltonian in \cref{eq:creation_controlled}. As it can be seen, there is 1 \qblock\ and $2n = 10$ blocks present for this model and \qblock\ mapping.   
In panel (c), we show the final circuit representing the diamond structure (see Def. \ref{def:q-diamond}) with this particular $Q$-block mapping, that can be obtained by using the compression theorems. In this form, the circuit requires $4n(n-1)+4n-2 = O(4n^2)$ CNOT gates. By using the relation given in panel (d), certain $XX$ and $X$ gates can be grouped as shown in panel (e). Then, using the relation \cref{eq:tfim2tfxy}, these groups can be transformed into free fermionic gates. After this simplification, the number CNOTs is reduced to $2n^2+2n-2 = O(2n^2)$, which is approximately half of the CNOT count of the circuit in panel (c). 

This mapping illustrates another advantage of $Q$-blocks over $P$-blocks given in \cref{subsec:pblocks}, because $P$-blocks cannot be generalized to compress the time evolution under the Hamiltonian given in \cref{eq:creation_controlled}.

\section{Discussion and Outlook}
\label{sec:discussion}

With the developments in this paper, we have significantly extended the class of Hamiltonians
whose evolution may be compressed. Specifically, any time dependent mean field Hamiltonian on an arbitrary lattice can now be simulated efficiently, such as the tight binding model with any particle number, superconducting free fermions, free fermions with particle injection, controlled free fermions on any lattice, and as given in \cref{asec:pbc}, certain spin models such as TFIM and TFXY on a ring. 
In addition to that, we can add and remove particles on a given mode with any amplitude. In Ref.~\cite{kokcu2023linear} it was shown that for momentum conserving Hamiltonians, obtaining Green's functions directly in the momentum basis via perturbing the system with $c^\dagger_k$ yields less noisy results compared to obtaining perturbing with $c^\dagger_r$ and post-processing via Fourier transformation. 
Accordingly, we provide a direct method for creating one or more particles with definite momenta in \cref{asec:creating_fermions}.
For other systems, such as those that arise in chemistry, these single particle modes are more general. Our methods can easily be generalized to these modes as well, and be used to compress particle creation/annihilation operators on any mode.
With the addition of controlled evolution, these techniques are now applicable in a broader regime. Controlled evolution is a key step in quantum phase estimation, and similarly plays a role in computing Hamiltonian matrix elements for real time subspace expansions~\cite{klymko2022real,shen2023estimating,shen2024efficient}.

The compression algorithm discussed here is not limited to compressing time evolution. Rather, it may be applied to any set
of quantum gates that obey the \bblock\ and \qblock\ properties.  As long as the block mapping can be found, the algorithms developed here can be readily applied.  We do not expect that general state preparation or other similar algorithms can be fully compressed, as the compressed circuits lack expressibility; the states that can be reached are  limited~\cite{kokcu2022fixed,d2007introduction}. However, there may be sizable subsets of the full quantum circuits whose elements do obey the block properties, and may be significantly shortened. We expect that the developments made in this work can thus have significant impact in transpiler software.

\rredtext{
The compression software is available as part of the fast free fermion compiler (\texttt{F3C}) \cite{f3c, f3cpp} at \url{https://github.com/QuantumComputingLab}.
\texttt{F3C} is based on the \texttt{QCLAB} toolbox \cite{qclab,qclabpp} for creating and
representing quantum circuits.
}

% ====================== %
% == ACKNOWLEDGEMENTS == %
% ====================== %
\begin{acknowledgments}
EK and AFK were supported by the National Science Foundation under award No.~1818914: PFCQC: STAQ: Software-Tailored Architecture for Quantum co-design
and No.~2325080: PIF: Software-Tailored Architecture for Quantum Co-Design.
LBO, RVB, and WAdJ were supported by the U.S. Department of Energy (DOE) under Contract No.  DE-AC02-05CH11231,  through the Office of Advanced Scientific Computing  Research  Accelerated  Research  for  Quantum Computing  Program.
This research used resources of the National Energy Research Scientific Computing Center (NERSC), a U.S. Department of Energy Office of Science User Facility located at Lawrence Berkeley National Laboratory, operated under Contract No. DE-AC02- 05CH11231.
This research used
resources of the Oak Ridge Leadership Computing Facility, which is a
DOE Office of Science User Facility supported under Contract
No.~DE-AC05-00OR22725.
We acknowledge the use of IBM Quantum services for this work.
Finally, we acknowledge the use of the QISKIT software package for use in the quantum computer calculations~\cite{qiskit_shorter}.
\end{acknowledgments}

\bibliography{refs}
\bibliographystyle{apsrev4-2}

\clearpage
\onecolumngrid
\appendix

\renewcommand\thefigure{S\arabic{figure}}  
\setcounter{figure}{0}

\section{Fermion Creation with Definite Momentum}
\label{asec:creating_fermions}

One of the interesting features we can exploit from the compressibility of the evolution under \cref{eq:lrfermions_c1} is that it allows us to create/annihilate particles in any single particle mode in a unitary fashion. Here, we will specifically show how to generate a circuit that creates a particle with a definite momentum $k$. 

Creation operators in momentum space $c^\dagger_k$ are a discrete Fourier transformation of creation operators in position space $c^\dagger_r$:
\begin{align}\label{eq:momentum_creator_app}
    c^\dagger_k = \sum_{r=1}^n e^{2 \pi ikr/n} c^\dagger_r,
\end{align}
where $k = 0,1,2,...,n-1$ and $n$ is the number of lattice sites. The $c_k$ satisfy the anti-commutation relations $\{c_p,c_k\} = \{c_p^\dagger,c_k^\dagger\} = 0$ and $\{c_p,c_k^\dagger\} = \delta_{pq}$ where $\delta_{pq}$ is the Kronecker delta. 

The state we would like to create is $\ket{\psi_k} = c^\dagger_k \ket{0}$ where $\ket{0}$ represents the empty fermion state, which is $\ket{000...0}$ in the computational basis. Unfortunately $c_k^\dagger$ is not a unitary operator. However, $c^\dagger_k + c^{\phantom{\dagger}}_k$ is Hermitian and due to the anti-commutation relations, we have $\big( c^\dagger_k + c^{\phantom{\dagger}}_k  \big)^2 = 1$. Moreover, $c_k \ket{0} = 0$. Combining these, we have
\begin{align}
    e^{i \theta \big( c^\dagger_k + c^{\phantom{\dagger}}_k  \big)} \ket{0} = \cos(\theta) \ket{0} + i \sin(\theta) \big( c^\dagger_k + c^{\phantom{\dagger}}_k  \big) \ket{0} = \cos(\theta) \ket{0} + i \sin(\theta) \ket{\psi_k}.
\end{align}
For $\theta = \pi/2$, we obtain $\psi_k$ up to a global phase
\begin{align}
    e^{i \frac{\pi}{2} \big( c^\dagger_k + c^{\phantom{\dagger}}_k  \big)} \ket{0} = i \ket{\psi_k}.
\end{align}
Thus, if we implement time evolution unitary under $\ham_k = c^\dagger_k + c^{\phantom{\dagger}}_k $ 
for time $ t = -\pi/2$, we can create a particle with momentum $k$. Now,
this Hamiltonian is a special case for the Hamiltonian in \cref{eq:lrfermions_c1}, and therefore we can generate a circuit via Trotter decomposition and compress it into a fixed depth circuit.

This method is not limited to creating a single particle. One can use the same unitary with different momentum $p$ to add another particle. Because $\ket{\psi_k}$ does not contain any particle with momentum $p \neq k$, $c_p \ket{\psi_k} = 0 $ still holds, and we can obtain
\begin{align}
    e^{i \frac{\pi}{2} \big( c^\dagger_p + c^{\phantom{\dagger}}_p  \big)}e^{i \frac{\pi}{2} \big( c^\dagger_k + c^{\phantom{\dagger}}_k  \big)} \ket{0} = i e^{i \frac{\pi}{2} \big( c^\dagger_p + c^{\phantom{\dagger}}_p  \big)} \ket{\psi_k} = - \big( c^\dagger_p + c^{\phantom{\dagger}}_p  \big) \ket{\psi_k} =  -c^\dagger_p \ket{\psi_k} = -c^\dagger_p \: c^\dagger_k \ket{0}.
\end{align}
This corresponds to first evolving under $\ham_k = c^\dagger_k + c^{\phantom{\dagger}}_k $, then under $\ham_p = c^\dagger_p + c^{\phantom{\dagger}}_p $. This is still a special case of one evolution under the time dependent Hamiltonian \cref{eq:lrfermions_c1}. Switching from $\ham_k$ to $\ham_p$ is just changing coefficients via simulation time. This can be applied for different creating more particles with different momenta as well. Thus we can create any number of particles with different momenta via using the compression of \cref{eq:lrfermions_c1}. 

As a remark, this is not limited to creating momentum definite states. This can be done for any orthonormal single particle basis simply by changing the coefficients of \cref{eq:momentum_creator_app}.

\section{\qblock\ Mappings for 1-D Spin Models with Periodic Boundary Condition}
\label{asec:pbc}

\subsection{TFIM with Periodic Boundary Condition}
\label{asec:PTFIM}
Consider the following TFIM Hamiltonian with periodic boundary condition
\begin{align}\label{aeq:TFIM_ham_pbc}
    \ham(t) = \sum_{i=1}^{n} \tilde{J}_i(t) X_i X_{i+1} + \sum_{i=1}^{n} \tilde{h}_i(t) Z_i
\end{align}
where $X_{n+1} := X_1$. We know that if it was not periodic but open boundary condition, the evolution under the Hamiltonian above would be compressed into a TFIM triangle \cite{kokcu2022algebraic, camps2022algebraic}. The periodic boundary condition term $X_n X_1$ prevents that, and cannot be represented via TFIM blocks. Instead, we provide the following \qblock\ mapping or this Hamiltonian
\begin{align}\label{eq:TFIM_PBC_qset}
\begin{split}
    Q^\PTFIM(\theta) &= e^{i \theta Z_2 Z_3 ... Z_n} = e^{ i \theta Z_1 P_Z}, \\
    B^\PTFIM_{2i-1}(\theta) &= e^{i \theta Z_i}, \\
    B^\PTFIM_{2i}(\theta) &= e^{i \theta X_{i} X_{i+1}},
\end{split}
\end{align}
where the superscript $\PTFIM$ stands for periodic transverse field Ising model, and $P_Z = Z_1 Z_2 \dots Z_n$ is the $Z$-parity operator.

Each term in the Hamiltonian can be represented via the mapping above. As it can be seen, apart from the boundary term $X_n X_1$, all other terms are already present in $B_i^\PTFIM$. The boundary term can be written as the following:
\begin{align}
    X_n X_1 = - (Y_1 Z_2 ... Z_{n-1} Y_n)\:P_Z.
\end{align}
After staring at it enough, one can recognize $Y_1 Z_2 ... Z_{n-1} Y_n$ as a mixture of hopping and pair creation/annihilation term between sites $1$ and $n$. From the results we obtained from fermionic swap operation, we know that this term can be written as a product of TFIM blocks. Considering that $P_Z$ commutes with all TFIM blocks, we find the following
\begin{align}\
\begin{split}
    X_n X_1 =& - \fswap_{n-1,n} ... \fswap_{2,3}\: Y_1 Y_2 \:\fswap_{2,3}^\dagger ... \fswap_{n-1,n}^\dagger \: P_Z \\
    =&  - \fswap_{n-1,n} ... \fswap_{2,3}\: Y_1 Y_2 \:P_Z \:\fswap_{2,3}^\dagger ... \fswap_{n-1,n}^\dagger \\
    =&  \fswap_{n-1,n} ... \fswap_{2,3}\: e^{i\frac{\pi}{4} Z_1 } e^{i\frac{\pi}{4} Z_2} 
    \:X_1 X_2 \:  P_Z \: e^{-i\frac{\pi}{4} Z_2 }e^{-i\frac{\pi}{4} Z_1 } \:\fswap_{2,3}^\dagger ... \fswap_{n-1,n}^\dagger, 
\end{split}
\end{align}
which is a product of the TFIM blocks and $Z_1 X_2 P_Z$. This new gate can be written as the following
\begin{align}
    X_1 X_2 P_Z = e^{i \frac{\pi}{4} Z_1} e^{i \frac{\pi}{4} X_1 X_2 }\: Z_1P_Z \: e^{-i \frac{\pi}{4} X_1 X_2 } e^{-i \frac{\pi}{4} Z_1 }. 
\end{align}
Thus, a rotation with the periodic boundary term $X_1 X_2$ can be written via the TFIM blocks, and rotation with $Z_1 P_Z$ which is $Q^\PTFIM$ itself. 

Let us show that the mapping \cref{eq:TFIM_PBC_qset} is a \qblock\ mapping. Since $B^\PTFIM_i = B^\TFIM_i$, we know that the block rules are already satisfied \cite{kokcu2022algebraic, camps2022algebraic}. Thus, we only need to show that the \qblock\ rules are satisfied. The $Q$-fusion rule is satisfied via the following relation 
\begin{align}
    Q^\PTFIM(\alpha) Q^\PTFIM(\beta) = Q^\PTFIM(\alpha + \beta).
\end{align}
$Q$-commutation is satisfied with $B^\PTFIM_{2i-1}$ because $[Z_i, Z_1 P_Z] = 0$, and with $B^\PTFIM_{2i}$ with $i \geq 1$ since $[X_i X_{i+1}, Z_1 P_Z] = [X_i X_{i+1}, Z_2 Z_3 \dots Z_n] = 0$. Finally, as it was the case for the TFIM \qblock mapping in \cref{eq:su2_relation}, $Q$-turnover property of the PTFIM mapping follows from the Euler decomposition of the following $\mathfrak{su}(2)$ generated by $Z_1 P_Z$ and $X_1 X_2$:
\begin{align}
    \mathfrak{su}(2) \equiv i \mathrm{span} \{ Z_1 P_Z, X_1 X_2, Y_1 X_2 P_Z\}.
\end{align} 
Therefore, via \cref{thm:q-compression}, time evolution of the Hamiltonian in \cref{aeq:TFIM_ham_pbc} can be compressed to a diamond.

In this form, the compression will lead to a circuit with $n$ qubits with $\bigO(n^2)$ depth and $\bigO(n^2)$ CNOT gates, due to the fact that $Q^\PTFIM(\theta)$ require $2n-4$ CNOTs and the same depth. With the observation $P_Z = Z_1 Z_2 ... Z_n$ commuting with every $B^\PTFIM_i$ and $Q^\PTFIM$, one can add one more qubit and reduce the circuit complexity into $\bigO(n)$ depth and $\bigO(n^2)$. To do so, one should put the $Z$-parity information into the added ancilla qubit, and apply controlled evolution via the ancilla by replacing $Q^\PTFIM(\theta)$ with 
\begin{align}
    Q^\TFIM(\theta) = e^{i \theta Z_0 Z_1}
\end{align}
where $0$ is the ancilla qubit. This is the Q-block we have introduced in \cref{eq:simple_controlled_qblocks}.

\subsection{TFXY Model with Periodic Boundary Condition}

Consider the following TFXY Hamiltonian with periodic boundary condition
\begin{align}\label{eq:TFXY_ham_pbc}
    \ham(t) = \sum_{i=1}^{n} \Big( \tilde{J}_i(t) X_i X_{i+1} + \tilde{K}_i(t) Y_i Y_{i+1} \Big) + \sum_{i=1}^{n} \tilde{h}_i(t) Z_i
\end{align}
where $X_{n+1} := X_1$ and $Y_{n+1} := Y_1$. We know that if it was not periodic but open boundary condition, the evolution under the Hamiltonian above would be compressed into a TFIM triangle \cite{kokcu2022algebraic, camps2022algebraic}. Here, we will show that the \qblock\ mapping given in \cref{eq:TFIM_PBC_qset} is capable of compressing the evolution of the periodic TFXY Hamiltonian above.

We already know that the mapping \cref{eq:TFIM_PBC_qset} is a \qblock\ mapping. The difference between the TFXY and the TFIM Hamiltonians is the $YY$ interaction. The following shows that the $YY$ terms can be generated via the terms given in TFIM:
\begin{align}
    Y_i Y_j = e^{i \frac{\pi}{4} Z_i } e^{i \frac{\pi}{4} Z_j } X_i X_j e^{-i \frac{\pi}{4} Z_i } e^{-i \frac{\pi}{4} Z_j }.
\end{align}
In a similar way, we can show that every term in the following ``generalized" TFXY model can be generated via the terms in the TFIM Hamiltonian:
\begin{align}\label{eq:gen_TFXY_ham_pbc}
\begin{split}
    \ham(t) = \sum_{i=1}^{n} \Big( &\tilde{J}_i(t) X_i X_{i+1} + \tilde{K}_i(t) X_i X_{i+1} \\
    +& \tilde{L}_i(t) X_i Y_{i+1} + \tilde{R}_i(t) Y_i X_{i+1} \Big) \\
    +& \sum_{i=1}^{n} \tilde{h}_i(t) Z_i,
\end{split}
\end{align}
where new $XY$ and $YX$ terms are also added. Since the periodic TFIM terms can be generated via the \qblock\ mapping \cref{eq:TFIM_PBC_qset}, then every term given in both Hamiltonians \cref{eq:TFXY_ham_pbc} and \cref{eq:gen_TFXY_ham_pbc} can be represented via the \qblock\ mapping \cref{eq:TFIM_PBC_qset} as well, and their evolution can be compressed into a diamond via \cref{thm:q-compression}. The properties of the resulting fixed depth circuit will be exactly the same as in \cref{asec:PTFIM}.

\section{\pblock s and Controlled Free Fermions}\label{subsec:pblocks}
Here, we introduce an new set of rules, referred to as the \pblock\ rules, which enable an efficient compression algorithm, serving as an alternative to the $Q$-block rules. These rules yield results comparable to those obtained with the \qblock\ rules discussed in \cref{sec:newblock_qblock}.
In this section, we will be explaining the \pblock\ rules, \pblock\ compression, and introduce a \pblock\ mapping for the controlled free fermions without the creation-annihilation operator. We discuss that this \pblock\ mapping can be considered as the TFXY version of the \qblock\ mapping introduced in \cref{subsec:controlled_free_fermions}\rredtext{, and has certain pros and cons over the controlled free fermionic $Q$-block mapping, e.g. $P$-block mapping can lead to a CNOT efficient circuit without the application of the TFIM $\rightarrow$ TFXY transformation given in \cref{eq:tfim2tfxy}, while $Q$-compression is based on TFIM compression, and can be implemented faster.}

\subsection{\pblock s and \pblock\ Compression}\label{subsec:pblock_rules_compression}

We define $P$-blocks and $P$-diamonds as follows:

\begin{definition}[$P$-Block]
\label{def:p-block}
Given blocks $B_i$ with $i \geq 1$, define a ``$P$-\emph{Block}'' $P=P(\myvec{\theta})$ as a structure that satisfies 
\begin{enumerate}
  \item \textbf{$P$-fusion:} For any set of parameters $\myvec{\alpha}$ and $\myvec{\beta}$, there exist $\myvec{a}$ such that
        \begin{equation}
          P(\myvec{\alpha}) \, P(\myvec{\beta}) = P(\myvec{a}),
        \end{equation}
  \item \textbf{$P$-commutation:} For any set of parameters $\myvec{\alpha}$ and $\myvec{\beta}$
        \begin{equation}
          P(\myvec{\alpha}) \, B_i(\myvec{\beta}) = B_i(\myvec{\beta}) \, P(\myvec{\alpha}), \qquad i>1,
        \end{equation}
  \item \textbf{$P$-turnover:} For any set of parameters $\myvec{\alpha}$, $\myvec{\beta}$, $\myvec{\gamma}$ and $\myvec{\theta}$  there exist $\myvec{a}$, $\myvec{b}$, $\myvec{c}$ and $\myvec{d}$ such that 
        \begin{equation}
          P(\myvec{\alpha}) \, B_1(\myvec{\beta}) \, P(\myvec{\gamma}) \, B_1(\myvec{\theta}) = B_1(\myvec{a}) \, P(\myvec{b}) \, B_1(\myvec{c}) \, P(\myvec{d}).
        \end{equation}
\end{enumerate}
If $P$ and blocks $B_i$ satisfy the properties listed above, we will say that $\{P,B_i\}$ is a $P$-block mapping.
\end{definition}

\begin{definition}[$P$-Diamond]
\label{def:p-diamond}
Define a ``$P$-diamond'' with height $n$ as
\begin{equation}
  D_n({\myvec{\alpha}},{\myvec{\beta}}) := \prod_{m=1}^{n+1} \left[ \left( \prod_{i=n \downarrow}^{1} B_i(\myvec{\alpha}_{i,m}) \right) P(\myvec{\beta_m}) \right]
\end{equation}
where $\downarrow$ in the product means that the multiplication is done in the decreasing order, and $B_i$ are blocks, $P$ is a P-block and each term in the product can have different variables.
\end{definition}

An illustration of the \pblock\ rules is given in \cref{fig:all_pblock_things} panels (a-c), and a $P$-diamond with height 4 is illustrated in \cref{fig:all_pblock_things} panel (d). Now we will prove that a $P$-diamond can absorb any block $B_i$ and a $P$-block.

\begin{theorem}[$P$-compression]\label{thm:p-compression}
    A $P$-diamond with height $n$ can be merged with any block $B_i$ with $i=1,2,\ldots,n$ and P-block $P$:
    \begin{align}
    \begin{split}
        &D_n({\myvec{\alpha}},{\myvec{\beta}})\,P(\myvec{\gamma}) = D_n({\myvec{a}},{\myvec{b}}), \\
        &D_n({\myvec{\alpha}},{\myvec{\beta}}) B_i(\myvec{\theta}) = D_n({\myvec{u}},{\myvec{v}}).
    \end{split}
    \end{align}
    Merging $P$ requires only one $P$-fusion. Merging $B_i$ requires $n+i-2$ $B$-turnovers, $1$ $P$-turnover and $1$ $B$-fusion.
\end{theorem}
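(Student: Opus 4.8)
The plan is to treat the two claims separately according to whether the appended factor is a $P$-block or a $B$-block. Merging $P(\myvec{\gamma})$ is immediate: the diamond ends in the factor $P(\myvec{\beta_n})$, so at the right edge of $D_n(\vec{\myvec{\alpha}},\vec{\myvec{\beta}})\,P(\myvec{\gamma})$ one finds the product $P(\myvec{\beta_n})\,P(\myvec{\gamma})$, which collapses to a single $P(\myvec{b}_n)$ by $P$-fusion. Updating only the last $\myvec{\beta}$ entry leaves a diamond of the same height, which establishes the first identity with one $P$-fusion and nothing else.

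For the second identity I would transport the appended block $B_i(\myvec{\theta})$ leftward into the diamond until it can be removed by a fusion, mirroring the way a block is absorbed by a triangle in the prior work (\cref{def:block}). For $i>1$, I first slide $B_i(\myvec{\theta})$ through the trailing $P$-block by $P$-commutation and through all $B_j$ with $|i-j|>1$ by ordinary commutation; these are cost-free rearrangements. The block then meets the staircase $B_n\cdots B_1$ and is carried along the diagonal ``red line'' by repeated turnover, exactly as in the triangle case, the length of this transport path fixing the stated number of turnovers.

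The genuinely new ingredient, and what I expect to be the main obstacle, is crossing the layer of $P$-blocks. The $P$-turnover of \cref{def:p-block} is a four-factor relation that applies only to the local pattern $B_1\,P\,B_1\,P$ (equivalently $P\,B_1\,P\,B_1$ read backwards), so the crux is to arrange --- by $P$-commuting the higher-index blocks of the adjacent layer harmlessly past a $P$-block --- that once the travelling block has been brought to index $1$ it sits in precisely this pattern together with the neighbouring $B_1$ and its two enclosing $P$-blocks. A single $P$-turnover then carries it across the layer without spawning an extra $P$-block or changing the height $n$. The delicate point is verifying that the sliding actually produces exactly the pattern $B_1\,P\,B_1\,P$ and that one $P$-turnover suffices, since careless routing either fails to form the pattern or leaves stray factors.

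It then remains to check that the residual blocks reassemble into the diamond normal form: the leftover index-$1$ and index-$2$ blocks regroup through the remaining turnovers, and a single fusion with the matching-index block of the target layer removes the surplus factor, returning a structure with the same $n$ staircase-plus-$P$ layers, namely $D_n(\vec{\myvec{u}},\vec{\myvec{v}})$. To make this transport argument rigorous rather than example-driven --- the illustration in \cref{fig:all_pblock_things}(e) fixes $n=4$ and $i=3$ --- I would organise it as an induction, for instance on the remaining transport distance of the block (each turnover or $P$-turnover strictly decreasing it), and treat the boundary indices $i$ near $1$ or $n$, where the diagonal path or some commutation steps degenerate, as separate base cases. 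Summing the per-step cost then reproduces the claimed tally of turnovers, one $P$-turnover, and one fusion.
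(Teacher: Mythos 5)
Your proposal follows essentially the same route as the paper: the appended $P$-block is absorbed by one $P$-fusion at the right edge, and the appended $B_i$ is transported by commutations and index-shifting turnovers until it reaches index $1$, carried across the $P$-layer by a single $P$-turnover applied to the pattern $B_1\,P\,B_1\,P$, and then moved to the opposite staircase where one fusion absorbs it. The paper's own proof is likewise diagrammatic (worked for $n=4$, $i=3$ with the general case asserted), so your plan matches it step for step; the only point to watch is the operation count, since the transport costs $(i-1)+(n-1)=n+i-2$ turnovers, indicating that the theorem's stated $n-i-2$ is a sign typo your final summation should flag rather than reproduce.
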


\begin{proof}
The proof of this results is given diagramatically in \cref{fig:all_pblock_things} panel (e).
Since it is trivial to show that the $P$-diamond structure can absorb a $P$-block using the $P$-fusion operation,
we only discuss how the $P$-diamond can absorb $B_i$. The illustration is for $n = 4$ and $i=3$ case, but the same
mechanism generalizes to arbitrary (finite) system sizes and $B_i$.

In step 1, the block $B_i$ gets shifted upwards through the $P$-diamond via normal block turnover operations until its index reaches $1$.
Starting at index $i$ (which is 3 in the figure above), this upward movement requires $i-1$ turnover operations.
In step 2 a $P$-turnover is applied, which brings the block from the right side to the left side of the formation of blocks shown in the middle of the figure above.
Finally, in step 3, we can shift the block downward to the final row of blocks in the $P$-diamond.
This again requires only normal block turnover operations, more precisely, we need $n-1$ turnover operations to move the block down to the bottom row of
a $P$-diamond of height $n$ ($n=4$ in the figure above).
At this stage, the block can then be fused with corresponding index $n$ block. 
This operation requires $n+i-2$ turnover operations, $1$ $P$-turnover operation and $1$ $B$-fusion operation in total.
\end{proof}

In the next section we discuss controlled free fermionic Hamiltonians as an example of a
$P$-compressible Hamiltonian.
However, we note that the theorem does not directly rely on the Hamiltonian, it only requires a set of circuit elements that follow the $B$-block and $P$-block rules.
Thus, any circuit that is solely composed of gates that satisfy these rules can be compressed to a $P$-diamond, whether it is a controlled time evolution circuit of a free fermionic system or not.

\begin{figure*}[t]
    \centering
    \includegraphics[width = 0.75 \columnwidth]{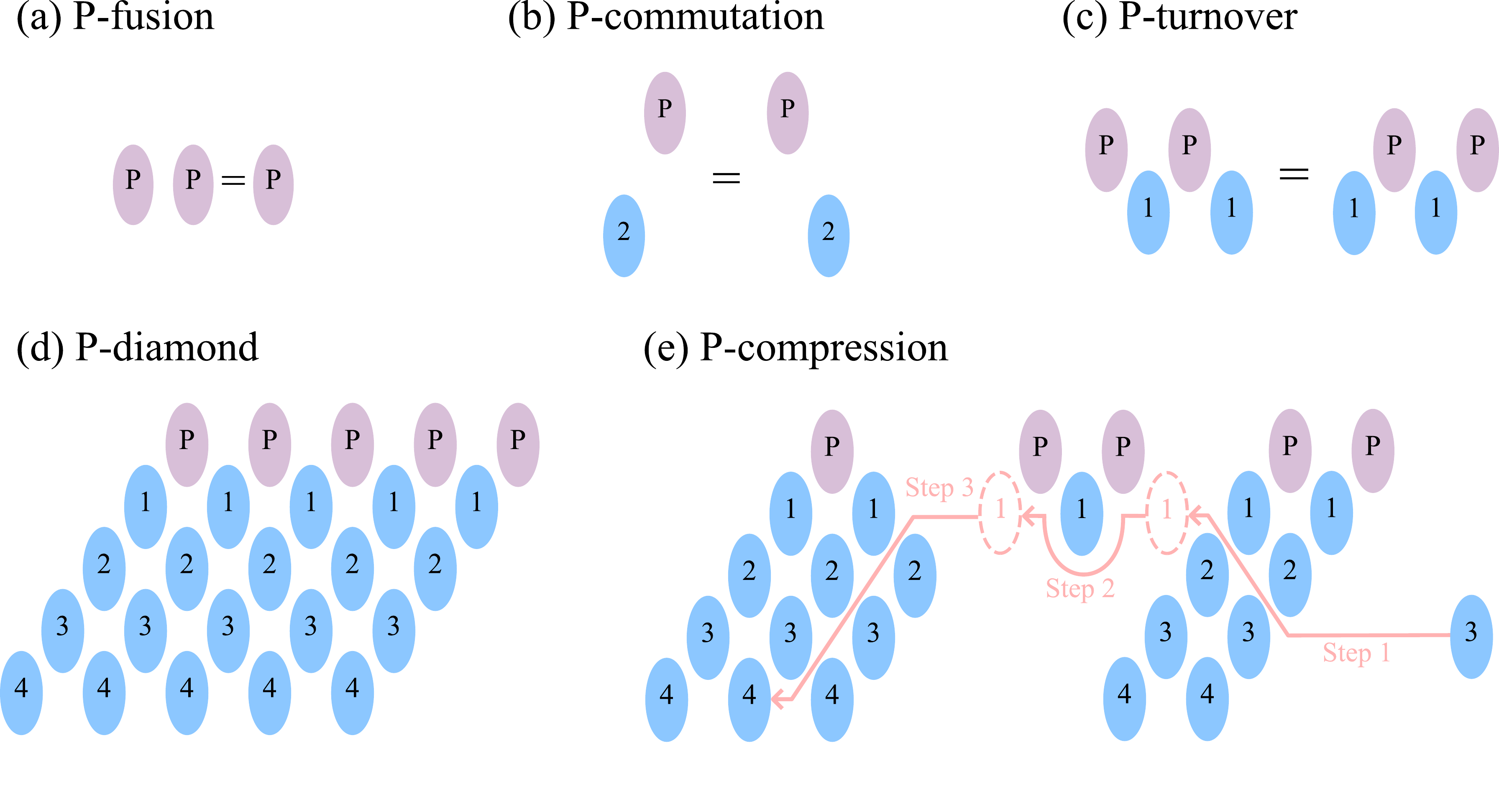}
    \caption{(a-c) $P$-block properties given in \cref{def:p-block}. Notice that the $P$-turnover is different from both turnover and $Q$-turnover rules, as it includes 4 blocks rather than 3 blocks. (d) The $P$-diamond structure defined in \cref{def:p-diamond} with heights $n=4$. (e) Proof of \cref{thm:p-compression}, i.e. how the $P$-diamond can absorb a block. 
    For this specific illustration, size of the $P$-diamond is $n=4$ and the index of the block is $i=3$. The block is lifted by other blocks via $B$-turnover operations, then passes through via $P$-turnover operation, and gets pushed down via $B$-turnover operations finally to be absorbed via a $B$-fusion operation. 
    }
    \label{fig:all_pblock_things}
\end{figure*}

\subsection{\pblock\ mapping for Controlled Free Fermions}\label{subsec:pblock_controlled_free_fermions}

In \cref{subsec:controlled_free_fermions}, we provided a $Q$-block mapping for controlled free fermionic Hamiltonian given in \cref{eq:simple_controlled}, and proved that the TFIM blocks and the gate $e^{i\theta Z_0 Z_1}$ was sufficient to represent time evolution of controlled free fermions. Here, we will show that the gate $e^{i\theta Z_0 Z_1}$ satisfies $P$-block rules if we consider a TFXY block mapping given in \cref{fig:all_block_things}(g) and \cref{eq:tfxy_block_mapping}, i.e. we will show that the following is a $P$-block mapping:
\begin{align}\label{eq:tfxy-pblock-mapping}
\begin{split}
    P^\TFXY(\theta) &= e^{i\theta Z_0 Z_1}, \\
    B^\TFXY_{i}(\myvec{\theta}) &= e^{i \theta_1\: Z_i}e^{i \theta_2\: Z_{i+1}}e^{i \theta_3\: X_i X_{i+1}} e^{i \theta_4\: Y_i Y_{i+1}}\:e^{i \theta_5\: Z_i}e^{i \theta_6\: Z_{i+1}}.
\end{split}
\end{align}

The $P^\TFXY$ satisfies the $P$-fusion due to the following:
\begin{align}
    P^\TFXY(\alpha) P^\TFXY(\beta) =P^\TFXY(\alpha + \beta). 
\end{align}
$P$-commutation is satisfied because for any $i>1$, $P^\TFXY$ and $B_i^\TFXY$ act on different sets of qubits.

The left-hand side of the $P$-turnover equation with the $P$-block mapping \cref{eq:tfxy-pblock-mapping} can be rewritten in terms of TFIM elements by transforming the TFXY block (i.e. a free fermionic matchgate) into TFIM elements by using the transformation given in \cref{eq:tfxy2tfim}. The entire expression can then be written as a sequence of $Q^\TFIM, B_1^\TFIM, B_2^\TFIM,$ and $B_3^\TFIM$. Thus, it can be compressed into a $Q$-diamond represented by the mapping given in \cref{eq:simple_controlled_qblocks}. Applying the grouping given in \cref{fig:controlled_TFIM_blocks}(e) to this $Q$-diamond yields the right-hand side of the $P$-turnover equation, proving that the $P$-turnover equality is satisfied in one direction. We can prove it in the other direction by simply observing the symmetry of the $Q$-block mapping rules, and applying the above procedure in reverse, i.e., compressing to a reverse $Q$-diamond. Thus, the mapping \cref{eq:tfxy-pblock-mapping} satisfies the $P$-turnover property as well, and is a $P$-block mapping.

The $P$-block mapping given in \cref{eq:tfxy-pblock-mapping} has its advantages and disadvantages over the $Q$-block maping for the controlled free fermions given in \cref{eq:simple_controlled_qblocks}. 
By using this $P$-block mapping, one can compress the controlled free fermionic evolution simply by using the TFXY blocks, and TFXY block rules given in \cite{kokcu2022algebraic, camps2022algebraic}, and can obtain a CNOT efficient circuit without the application of the TFIM $\leftrightarrow$ TFXY transformations given in \cref{eq:tfim2tfxy,eq:tfxy2tfim}. Thus, for devices that uses CNOT gates, it is more natural to use matchgates i.e. the TFXY blocks, and $P^\TFXY$ directly to implement the compression for the controlled free fermionic evolution. However, the $Q$-block mapping turns out to be more fundamental as the TFIM mapping, and faster to implement compression. In addition, $Q$-blocks can be generalized to controlled free fermions with creation as given in \cref{subsec:controlled_free_fermions_creation}.

\end{document}